\newtheorem{theorem}{Theorem}[section]
\newtheorem{corollary}[theorem]{Corollary}
\newtheorem{lemma}[theorem]{Lemma}
\newcommand\DELETE[1]{}
\begin{document}

\title{{\bf On relative clique number of colored mixed graphs}}
\author{
{\sc Sandip Das}$^{a}$, {\sc Soumen Nandi}$^{b}$, {\sc Debdeep Roy}$^{a}$, {\sc Sagnik Sen}$^{c}$ \\
\mbox{}\\
{\small $(a)$ Indian Statistical Institute, Kolkata, India}\\
{\small $(b)$ Birla Institute of Technology \& Science Pilani, Hyderabad Campus, India}\\
{\small $(c)$ Ramakrishna Mission Vivekananda Educational and Research Institute, Belur Math, India}
}

\date{\today}

\maketitle

\begin{abstract}
An $(m, n)$-colored mixed graph is a graph having arcs of $m$ different colors and
edges of $n$ different colors. A graph homomorphism of an $(m, n$)-colored mixed
graph $G$ to an $(m, n)$-colored mixed graph $H$ is a vertex mapping such that if $uv$
is an arc (edge) of color $c$ in $G$, then $f(u)f(v)$ is also an arc (edge) of color $c$.
The ($m, n)$-colored mixed chromatic number of an $(m, n)$-colored mixed graph $G$,
introduced by Ne\v{s}et\v{r}il and Raspaud [J. Combin. Theory Ser. B 2000] is the order
(number of vertices) of the smallest homomorphic image of $G$. Later Bensmail,
Duffy and Sen [Graphs Combin. 2017] introduced another parameter related to the
$(m, n)$-colored mixed chromatic number, namely, the $(m, n)$-relative clique number
as the maximum cardinality of a vertex subset which, pairwise, must have distinct
images with respect to any colored homomorphism.

In this article, we study the $(m, n$)-relative clique number for the family of subcubic graphs, graphs with maximum degree $\Delta$, planar graphs and triangle-free planar
graphs and provide new improved bounds in each of the cases. In particular, for
subcubic graphs we provide exact value of the parameter.
\end{abstract}

\noindent \textbf{Keywords:} colored mixed graphs, signed graphs, graph homomorphisms, chromatic number, clique number, planar graphs

\section{Introduction}
The concept of vertex coloring, chromatic number and graph homomorphism was generalised by Ne\v{s}et\v{r}il and Raspaud~\cite{impj} by defining $(m,n)$-colored mixed graphs and colored graph homomorphisms. This idea of homomorphism captures the definition of homomorphism for graphs, oriented graphs and edge-colored graphs.
	
	A mixed graph is a simple graph where a subset of the edges have been oriented to become arcs. An $(m,n)$-colored mixed graph $G$ with vertex set $V(G)$, arc set $A(G)$ and edge set $E(G)$ is a mixed graph where each arc is colored with one of the $m$ colors $\lbrace 1,2,3,\cdots,m\rbrace $ and each edge is colored with one of the $n$ colors $\lbrace 1,2,3,\cdots,n\rbrace $. When $m=0$ (resp. $n=0$), it is assumed that the $(0,n)$-colored mixed graph (resp. $(m,0)$-colored mixed graph) does not contain any arcs (resp. edges). So a $(1,0)$-colored mixed graph is an oriented graph, a $(0,1)$-colored mixed graph is an undirected graph and a $(0,k)$-colored mixed graph is a $k$-edge-colored graph~\cite{coxeter}.
	
	Every $(m,n)$-colored mixed graph $G$ has an underlying simple graph denoted by $U(G)$. If $uv\in E(U(G))$, then the adjacency type of $uv$ is an edge colored $i$ if $uv\in E(G)$ and $uv$ has color $i$ or an arc colored $j$ if $uv\in A(G)$ and $uv$ has color $j$. In this article, we consider only those $(m,n)$-colored mixed graphs whose underlying graphs are simple that is, they don't have any loops or multiple edges.
	
	In discussing an $(m,n)$-colored mixed graph, we make no distinction in notation between the edges and arcs of the graph. As each pair of adjacent vertices in $U(G)$ has exactly one adjacency type in $G$, there is no scope for confusion in the notation $uv$ being used to refer to either an arc from $u$ to $v$ or an edge between $u$ and $v$, as the case may be. We say that 
	$uv,wx\in A(G) \cup E(G)$ have the same adjacency type if one of the following holds:

	\begin{itemize}
	\item $uv,wx\in A(G)$ and both have color $i\in \lbrace1,2,3,\cdots,m\rbrace$,
	
	\item $vu,xw\in A(G)$ and both have color $i\in \lbrace1,2,3,\cdots,m\rbrace$,
	
	\item $uv,wx\in E(G)$ and both have color $j\in \lbrace1,2,3,\cdots,n\rbrace$.
	\end{itemize}

	Let $G$ and $H$ be $(m,n)$-colored mixed graphs. A $colored~homomorphism$ of $G$ to $H$ is a function 
	$f:V(G) \rightarrow V(H)$ such that the adjacency type of $uv$ in $G$ is the same as that of 
	$f(u)f(v)$ in $H$, for all $uv \in E(G) \cup A(G)$. In other words, a colored homomorphism is a vertex mapping that preserves colored edges and colored arcs \cite{impj}. We write $f:G\rightarrow H$ when there exists a homomorphism $f$ from $G$ to $H$ and $H$ is called the homomorphic image of $G$.
	
	The $(m,n)$-colored mixed chromatic number of $G$, denoted by $ \chi_{m,n}(G)$, is the least integer $k$ such that there exists a homomorphic image of $G$ of order $k$. For a simple graph $\Gamma $, we let $ \chi_{m,n}(\Gamma)$ denote the maximum $(m,n)$-colored mixed chromatic number over all $(m,n)$-colored mixed graphs $G$ such that $U(G)=\Gamma $. For a family $\mathcal{F}$ of 
	 graphs   $\chi_{m,n}(\mathcal{F})$ denote the maximum of $ \chi_{m,n}(G)$ taken over all $G \in \mathcal{F}$.
	
	It is observed that letting $m=0$ and $n=1$ in the definitions given above lead to the usual definitions of graph homomormorphism and chromatic number. Likewise, letting $m=1$ and $n=0$ gives the definition of oriented graph homomorphism and oriented chromatic number considered by many researchers over the last two decades. Also, putting $m=0$ and $n=k$ gives the definition of homomorphism used by many authors in the study of homomorphisms of $k$-edge-colored graphs \cite{coxeter, simon, ochem}.


	The main focus of this article is on structures analogous to cliques for general $(m,n)$-colored mixed graphs.
	
	An \textit{$(m,n)$-clique} $C$ is an $(m,n)$-colored mixed graph for which $ \chi_{m,n}(C)=|V(C)|$. The 
	\textit{$(m,n)$-absolute clique number}
	$\omega_{a(m,n)}(G)$ of an 
	$(m,n)$-colored mixed graph $G$ is the largest $k$ such that $G$ contains an $(m,n)$-clique of order $k$. For $m=0$ and $n=1$, the above definitions coincide with the definitions of clique and clique number and for $m=1$ and $n=0$. Also the definition above is the same as the definitions of oriented clique  and oriented absolute clique number introduced by Klostermeyer and MacGillivray~\cite{mac}.
	
	In the previous studies of 
	oriented cliques~\cite{sen, nandy}, a related parameter namely the \textit{oriented relative clique} number was an useful tool in the study of oriented chromatic number. A generalization of this parameter for 
	$(m,n)$-colored mixed graphs was introduced by Bensmail, Duffy and Sen~\cite{duffy}. 
	A vertex subset $R\subseteq V(G))$ is a \textit{relative $(m,n)$-clique} of an $(m,n)$-colored mixed graph $G$ if for every pair of distinct vertices $u,v\in R$ and every homomorphism $f:G\rightarrow H$, we have $f(u)\neq f(v)$. That is, no two distinct vertices of a relative clique can be identified under any homomorphism. The \textit{$(m,n)$-relative clique number} 
	$\omega_{r(m,n)}(G)$ of an $(m,n)$-colored mixed graph $G$ is the cardinality of a largest relative 
	$(m,n)$-clique of $G$. Bensmail, Duffy and Sen~\cite{duffy}
	showed that 
	$$\omega_{a(m,n)}(G)\leq \omega_{r(m,n)}(G)\leq \chi_{m,n}(G).$$ 
	For simple undirected graphs, the absolute and relative clique numbers coincide which is not the case when $(m,n)\neq (0,1)$~\cite{duffy}.

For a simple graph $\Gamma $ the $(m,n)$-absolute (relative)
clique number $\omega_{a(m,n)}(\Gamma)$ (respectively,
$\omega_{a(m,n)}(\Gamma)$) denote the maximum 
$(m,n)$-absolute (relative)
clique number over all $(m,n)$-colored mixed graphs $G$ such that $U(G)=\Gamma $. For a family $\mathcal{F}$ of graphs 
 $\omega_{a(m,n)}(\mathcal{F})$ 
 (respectively, $\omega_{a(m,n)}(\mathcal{F})$) denote the maximum of 
 $\omega_{a(m,n)}(G)$ (respectively, $\omega_{a(m,n)}(G)$) taken over all $G \in \mathcal{F}$.

Furthermore, Bensmail, Duffy and Sen~\cite{duffy} provided the following characterization of  $(m,n)$-cliques. Let $G$ be an $(m,n)$-colored mixed graph and let $uvw$ be a 2-path in $U(G)$. We say that $uvw$ is a $special~2-path$ if  one of the following statements hold:
	\begin{itemize}
	\item $uv$ and $vw$ are edges of different colors,
	
	\item $uv$ and $vw$ are arcs(possibly of same color),
	
	\item $uv$ and $wv$ are arcs of different colors,
	
	\item $vu$ and $vw$ are arcs of different colors,
	
	\item exactly one of $uv$ and $vw$ is an edge.
	\end{itemize}

In other words, $uvw$ is a \textit{special 2-path} if $uv$ and 
$vw$ do not have the same adjacency type.

In the following we rephrase a lemma and its corollary 
due to Bensmail, Duffy and Sen~\cite{duffy} which is instrumental to our work.

\begin{lemma}[Bensmail, Duffy and Sen 2017~\cite{duffy}]\label{lem structure}
Let $G$ be an $(m,n)$-colored mixed graph. A pair of vertices $u,v\in V(G)$ are part of a relative clique iff they are either adjacent or connected by a special 2-path.
\end{lemma}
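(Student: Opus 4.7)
My plan is to prove the iff statement directly in both directions, leveraging the key observation (stated right after the definition) that a $2$-path $uwv$ in $G$ is special precisely when the adjacencies $uw$ and $wv$ do not have the same adjacency type with respect to the middle vertex $w$.

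For the $(\Leftarrow)$ direction, suppose $u$ and $v$ are either adjacent or connected by a special $2$-path $uwv$, and let $f:G \to H$ be an arbitrary homomorphism. If $uv$ is an adjacency in $G$, then $f(u) \neq f(v)$ is forced since the underlying graph of $H$ has no loops. If $uwv$ is a special $2$-path, I would assume for contradiction that $f(u) = f(v)$; then the two adjacencies $uw$ and $wv$ in $G$ must both map to adjacencies between $f(u) = f(v)$ and $f(w)$ in $H$, and by the homomorphism condition these images must carry the (by specialness, distinct) adjacency types of $uw$ and $wv$. This would require two distinct colored edges or arcs between the same pair of vertices in $H$, contradicting simpleness of $U(H)$. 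Hence every homomorphism separates $u$ from $v$, so $\{u,v\}$ is a relative clique.

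For the $(\Rightarrow)$ direction, I would prove the contrapositive. Suppose $u$ and $v$ are non-adjacent in $G$ and no $2$-path $uwv$ through a common neighbor $w$ is special, i.e., $uw$ and $wv$ have the same adjacency type at $w$ for every common neighbor $w$. I will exhibit a homomorphism $f:G \to H$ with $f(u) = f(v)$ by taking $H$ to be the quotient of $G$ obtained by identifying $u$ and $v$ into a single new vertex $z$, with $f$ the natural quotient map. Every adjacency not touching $\{u,v\}$ carries over unchanged, and every adjacency incident to exactly one of $u,v$ becomes incident to $z$ carrying its original adjacency type. Non-adjacency of $u$ and $v$ precludes a loop at $z$, and for each common neighbor $w$ the two adjacencies $uw$ and $wv$, now sharing the same adjacency type, collapse to a single well-defined adjacency $zw$ in $H$, so no multi-adjacency of incompatible types arises. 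Thus $H$ is a valid $(m,n)$-colored mixed graph and $f$ is a homomorphism by construction.

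The main subtlety lies in verifying well-definedness of the quotient $H$ in the forward direction: one must check that no conflicting multi-adjacency occurs at any common neighbor $w$, which is exactly what the reformulated special $2$-path condition guarantees. Beyond this, the argument is largely bookkeeping: unpacking the five enumerated cases in the definition of a special $2$-path amounts to observing that they exhaustively list the ways in which two adjacencies at a shared middle vertex can fail to share the same adjacency type, and each such failure is precisely an obstruction to identifying the two endpoints under any homomorphism.
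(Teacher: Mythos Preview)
Your proposal is correct and follows essentially the same approach as the paper: both directions are handled identically, with the $(\Leftarrow)$ direction using the loop/multi-edge obstruction in the target graph and the $(\Rightarrow)$ direction constructing the quotient homomorphism that identifies $u$ and $v$. Your write-up is in fact slightly more explicit than the paper's about verifying that the quotient $H$ is a well-defined $(m,n)$-colored mixed graph at common neighbors, which is the one point requiring care.
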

\begin{proof}
(Necessity) Let $u,v$ be two vertices of an $(m,n)$-colored mixed graph $G$. In case they are not a part of any relative clique, then there exists an $(m,n)$-colored mixed graph $H$ and a homomorphism $f:G\rightarrow H$ such that $f(u)=f(v)$. If $u$ and $v$ are adjacent, then $f(u)f(v)$, the image of $uv$, is a loop in $H$ contradicting the fact that $U(H)$ is simple. Assume that $u,v$ are the ends of a special 2-path $uxv$. As $f(u)=f(v)$ and $U(H)$ is simple, so $ux$ and $vx$ must have similar adjacency type. This contradicts the fact that $uxv$ is a special 2-path. So. $u$ and $v$ can neither be adjacent, nor connected by a special 2-path.

(Sufficiency) Assume that $u$ and $v$ are neither adjacent nor connected by a special 2-path. Identifying vertices $u$ and $v$ and deleting duplicate edges/arcs of the same color, we arrive at an $(m,n)$-colored mixed graph $H$. Let $x$ be the vertex obtained by identifying vertices $u$ and $v$. Consider the vertex mapping $f:V(G)\rightarrow V(H)$ given by 
$$
f(z)=
\begin{cases}
x~ \text{if}~ z=u,v\\
z  ~  otherwise
\end{cases}
$$
The function $f$ is a homomorphism of $G$ to $H$.
\end{proof}

An immediate corollary due to 
Bensmail, Duffy and Sen~\cite{duffy} characterizes $(m,n)$-cliques. 

\begin{corollary}[Bensmail, Duffy and Sen 2017~\cite{duffy}]
An \textit{$(m,n)$-colored mixed graph} $G$ is an $(m,n)$-clique if and only if every pair of non-adjacent vertices of $G$ are joined by a special 2-path.
\end{corollary}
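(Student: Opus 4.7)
The plan is to derive this corollary essentially as a direct rephrasing of Lemma~\ref{lem structure} applied simultaneously to all pairs of vertices. The key observation is that $G$ being an $(m,n)$-clique is the same as saying $V(G)$ itself is a relative $(m,n)$-clique.

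First, I would unwind the definition: $G$ is an $(m,n)$-clique iff $\chi_{m,n}(G)=|V(G)|$, which (by minimality of the target in the definition of $\chi_{m,n}$) is equivalent to the statement that every homomorphism $f:G\to H$ of $G$ to any $(m,n)$-colored mixed graph $H$ is injective on $V(G)$. By the definition of a relative $(m,n)$-clique, this is in turn equivalent to $V(G)$ being a relative $(m,n)$-clique, i.e., every pair $u,v\in V(G)$ of distinct vertices belongs to a relative clique of $G$.

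Next, I would apply Lemma~\ref{lem structure} pair-by-pair: a pair $u,v$ belongs to a relative clique iff $u,v$ are adjacent or joined by a special 2-path. The adjacent case is automatic, so the only nontrivial requirement falls on the non-adjacent pairs, yielding the stated characterization in both directions.

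There is no real obstacle in this proof; it is a matter of carefully chaining the equivalences and noting that the universal quantifier over pairs in Lemma~\ref{lem structure} gives exactly the universal condition in the corollary. The only point one needs to be mindful of is that the notion ``being part of a relative clique'' from the lemma refers to the existence of \emph{some} relative clique containing the pair, whereas here we want the single relative clique $V(G)$; both sides agree because if every pair lies in some relative clique then no pair can be identified by any homomorphism, so $V(G)$ itself is a relative clique.
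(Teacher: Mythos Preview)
Your proposal is correct and follows essentially the same approach as the paper: reduce ``$G$ is an $(m,n)$-clique'' to ``$V(G)$ is a relative $(m,n)$-clique'' and then invoke Lemma~\ref{lem structure} for every pair of vertices. Your write-up is in fact more careful than the paper's, since you explicitly justify why the pairwise condition from the lemma upgrades to $V(G)$ being a single relative clique.
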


\begin{proof}
From the definitions, it follows that an $(m,n)$-colored mixed graph $G$ is an $(m,n)$-clique if and only if all its vertices are part of the same relative clique. The result is now immediate from Lemma 1.1.
\end{proof}	

Using the characterization presented in 
Lemma~\ref{lem structure} we study the structure of $(m,n)$-relative cliques and the lower/upper bounds of $(m,n)$-relative clique numbers of different graph families. 
	

In this article, we have generalized some results 
known for 
$(1,0)$-relative clique number (alternatively known as oriented relative clique number) due to 
Das, Mj and Sen~\cite{sir} for all $(m,n) \neq (0,1)$.

In Section~\ref{Chapter2_5} we provided a detailed literature review. 
In Section~\ref{Chapter3} we studied $(m,n)$-relative clique number for  graphs with maximum degree $\Delta$.
In Section~\ref{Chapter4} we studied $(m,n)$-relative clique number for subcubic graphs. 
In Section~\ref{Chapter5} we studied $(m,n)$-relative clique number of planar  graphs.  
Finally, in Section~\ref{chap con} we conclude the article. 
	
\section{Literature Survey}\label{Chapter2_5}
In this chapter, we list out previously known results related to the works done in this article. We have not maintained the standard chronological order while listing the results here. Rather we have clubbed the results which are related together for the sake of the flow of this 
review.

A  \textit{tree} is a connected graph with no cycles. A \textit{forest} is a graph whose each connected component is a tree.

\begin{theorem}[Ne\v{s}et\v{r}il and Raspaud 1998 \cite{impj}]
Let $\mathcal{F}_{m,n}$ be the class of 
$(m,n)$-colored mixed forests. Then

$$
\chi_{m,n}(\mathcal{F}_{m,n})=
\begin{cases}
2m+1, \text{ where}  n=0\\
2(m+\lfloor \frac{n}{2} \rfloor +1) \text{ for } n\neq 0.
\end{cases}
$$
\end{theorem}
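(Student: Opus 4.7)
The plan is to establish matching upper and lower bounds for $\chi_{m,n}(\mathcal{F}_{m,n})$; since a forest admits a homomorphism into a graph $H$ if and only if each of its tree components does, and disjoint tree components can share target vertices, I will reduce both bounds to the tree case.

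For the upper bound I would exhibit a single universal target $H_{m,n}$ of the claimed order to which every $(m,n)$-colored mixed tree admits a homomorphism. The clean case is $n=0$: take $H_{m,0}$ to be the Cayley tournament on $\mathbb{Z}_{2m+1}$ with the arc of color $i$ encoded by the shift $+i$. Because $\pm 1,\pm 2,\ldots,\pm m$ exhaust $\mathbb{Z}_{2m+1}\setminus\{0\}$, this is a well-defined $(m,0)$-colored mixed graph on $2m+1$ vertices. Given any $(m,0)$-colored mixed tree $T$, I would root it at a vertex $r$, set $f(r)=0$, and propagate labels along tree-edges: for each child $v$ of a parent $u$ joined by an arc of color $i$, define $f(v):=f(u)+i$ if the arc is $u\to v$ and $f(v):=f(u)-i$ otherwise. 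Acyclicity of $T$ makes $f$ well defined, and by design $f$ preserves arc colors. For $n\geq 1$, I would construct $H_{m,n}$ as a Cayley-like graph of order $2(m+\lfloor n/2\rfloor+1)$ on a suitable group, using $m$ generator-pairs for the arc colors together with $n$ further orbits for edge colors, chosen so that from every vertex of $H_{m,n}$ there is at least one neighbor of every one of the $2m+n$ adjacency types. The tree homomorphism is built by the same rooted propagation, using the fact (via Lemma~\ref{lem structure}) that two tree-neighbors of a common vertex sharing the same adjacency type are not forced to distinct images and so may be sent to a common target vertex.

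For the lower bound I would exhibit an explicit forest $F^{\ast}$ meeting the bound. The baseline is a star: attach to a center $c$ one leaf of each of the $2m$ (when $n=0$) or $2m+n$ (when $n\geq 1$) distinct adjacency types. By Lemma~\ref{lem structure}, every pair of leaves is joined by a special $2$-path through $c$ and each leaf is adjacent to $c$, so all vertices of $F^{\ast}$ must receive pairwise distinct images under any homomorphism. This already gives $2m+1$ when $n=0$ and $2m+n+1=2(m+\lfloor n/2\rfloor+1)$ when $n$ is odd. When $n$ is even the star only gives $2m+n+1$, one short of the bound, so I would augment $F^{\ast}$ with additional structure: the prototype is the path $v_1v_2v_3v_4v_5$ with edge colors $1,2,1,2$ in the case $(m,n)=(0,2)$, which has chromatic number $4$, because the non-adjacent non-special pairs $(v_1,v_4)$ and $(v_2,v_5)$ cannot simultaneously be identified without producing two edges of distinct colors between the same image pair---a color conflict in the image graph. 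Adapting this parallel-edge obstruction to general even $n$ forces the extra vertex.

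The main obstacle is the upper-bound construction for $n\geq 1$: packing the $2m+n$ adjacency types into a Cayley-like graph on only $2(m+\lfloor n/2\rfloor+1)$ vertices while preserving type-completeness at every vertex requires a careful choice of the ambient group (necessarily non-abelian once $n$ grows, since abelian groups of that order do not supply enough generator orbits) and of the generating system. A secondary obstacle is the lower bound for even $n$, where one must promote the combinatorial color-conflict argument from the $(0,2)$ example to an explicit construction valid for arbitrary $m$ and $n$.
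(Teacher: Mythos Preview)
The paper does not supply a proof of this statement at all: it appears in Section~\ref{Chapter2_5} (Literature Survey) as a quoted result of Ne\v{s}et\v{r}il and Raspaud~\cite{impj}, with no accompanying argument. There is therefore nothing in the paper against which to compare your proposal.

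Regarding the proposal itself, your overall architecture (universal target for the upper bound, explicit witnesses for the lower bound) is the natural one, and the $n=0$ case via the Cayley tournament on $\mathbb{Z}_{2m+1}$ together with rooted propagation is correct and complete. However, the proposal is explicitly incomplete for $n\geq 1$: you yourself flag the construction of $H_{m,n}$ as ``the main obstacle'' and do not carry it out, and the even-$n$ lower bound is only sketched by analogy with the $(0,2)$ path example. One specific point to reconsider: your remark that the ambient group must be ``necessarily non-abelian once $n$ grows'' is not obviously right. For odd $n$ the target has $2m+n+1$ vertices and must be complete with each adjacency type realized exactly once at every vertex; this amounts to decomposing $K_{2m+n+1}$ into $m$ spanning $2$-regular oriented subgraphs and $n$ perfect matchings, which is achievable over $\mathbb{Z}_{2m+n+1}$ (an abelian group) or by direct combinatorial decomposition. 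The even-$n$ case similarly admits abelian constructions. So the obstruction you anticipate is less severe than stated, but the proposal as written remains a plan rather than a proof.
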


A \textit{path} is a graph with a 
 sequence of vertices 
in which consecutive vertices are adjacent. 
Fabila-Monroy, Flores, Huemer and Montejano~\cite{flores}
calculated the exact $(m,n)$-colored mixed chromatic number of the family of all paths. 

\begin{theorem}[Fabila-Monroy, Flores, Huemer and Montejano 2008 \cite{flores}]
Let $\mathcal{L}$ be the class of paths. Then 
$$\displaystyle \chi_{m,n}(\mathcal{L})=2m+n+\epsilon,$$ where $ \epsilon =1 $ for $n$ odd or $n = 0$, and $ \epsilon =2 $ for $n > 0$ even.
\end{theorem}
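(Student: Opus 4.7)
The proof plan is to establish the matching upper and lower bounds separately.

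For the upper bound $\chi_{m,n}(\mathcal{L}) \leq 2m+n+\epsilon$, I would construct a single universal target $(m,n)$-colored mixed graph $H$ on $N := 2m+n+\epsilon$ vertices with the following \emph{type-regularity} property: every vertex has exactly one incident neighbor via each of the $2m+n$ distinguishable adjacency types (arcs distinguished by both color and direction, edges only by color). Given such an $H$, any $(m,n)$-colored path $P = v_0 v_1 \cdots v_k$ with edge-type sequence $a_1, \ldots, a_k$ lifts to a homomorphism: pick any image $w_0 \in V(H)$ and inductively set $w_i$ to be the unique neighbor of $w_{i-1}$ realizing type $a_i$. The walk is loop-free because the unique type-$a_i$ neighbor of $w_{i-1}$ is, by definition, distinct from $w_{i-1}$.

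To construct a type-regular $H$, I would use a $1$-factorization of $K_N$, which exists precisely when $N$ is even and partitions the edges of $K_N$ into $N-1$ perfect matchings. One assigns $n$ matchings to the $n$ edge colors and pairs another $2m$ matchings into $m$ pairs; each pair is a $2$-regular subgraph, a union of even cycles, which can be oriented consistently as a permutation to serve as one of the arc colors. This works for $n$ odd (since $N = 2m+n+1$ is even), for $n > 0$ even (since $N = 2m+n+2$ is even, with one spare matching discarded), and for $n = 0$ (where one can instead use the cyclic tournament on $\mathbb{Z}_{2m+1}$ with generators $1, \ldots, m$ as the arc colors).

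For the lower bound, I would show that any universal target $H$ has $|V(H)| \geq N$. The first step is to argue that a minimum-order universal $H$ must itself be type-regular: if some vertex $v$ lacked an incident edge/arc of some type $a$, one constructs an explicit probe path whose every homomorphic image is forced to send some vertex onto $v$ and then demand an outgoing edge of type $a$, contradicting universality. Type-regularity then gives $|V(H)| \geq 2m+n+1$ by counting the $2m+n$ distinct typed neighbors at any single vertex. When $n > 0$ is even, each of the $n$ edge-color classes is a $1$-regular subgraph of $H$ and hence a perfect matching of $V(H)$; perfect matchings require an even number of vertices, and $2m+n+1$ is odd in this case, so $|V(H)| \geq 2m+n+2$.

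The main obstacle I anticipate is the type-regularity step of the lower bound. A priori, a universal target could be lopsided, with different vertices covering different subsets of adjacency types and paths still being threaded through $H$ by choosing well-matched starting images. Establishing type-regularity rigorously requires, for each missing $(v, a)$ configuration, a specific probe path whose image is pinned down by short local constraints to visit $v$ and demand type $a$. The natural tool is a case analysis on short $(m,n)$-colored $2$- and $3$-paths, leveraging Lemma~\ref{lem structure} to force identifications or separations among the images, combined with a counting argument on how many distinct local type configurations each vertex of $H$ must simultaneously support to absorb all longer paths.
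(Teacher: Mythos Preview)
The paper does not prove this theorem: it appears in Section~\ref{Chapter2_5} (Literature Survey) as a result quoted from Fabila-Monroy, Flores, Huemer and Montejano~\cite{flores}, with no argument supplied. There is therefore nothing in the present paper to compare your proposal against.

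That said, your upper-bound construction via a $1$-factorization of $K_N$ (and the cyclic tournament on $\mathbb{Z}_{2m+1}$ when $n=0$) is the natural approach and should go through cleanly. The lower bound is where the real work lies, and you have correctly identified the obstacle: forcing type-regularity of a minimum universal target $H$. Your sketch ``construct a probe path whose every homomorphic image is forced to send some vertex onto $v$'' is not yet a proof, since a priori different homomorphisms of the same path may avoid any prescribed vertex of $H$; minimality of $H$ only guarantees, for each $w\in V(H)$, \emph{some} path all of whose images hit $w$, and the preimage of $w$ need not be an endpoint you can extend from. Making this step rigorous requires either an explicit family of paths whose type sequences exhaust all typed walks of length $|V(H)|$ from every starting vertex of $H$, or a direct counting/parity argument on the typed degree sequence; for the precise execution you should consult the original paper~\cite{flores}.
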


A  \textit{$k$-acyclic coloring} of the vertices of
an undirected graph $G$ is an assignment of $k$ colors to $G$ such that each color induces an independent set and any two colors induces a forest.  The \textit{acyclic chromatic number} $\chi_a(G)$ of
a graph $G$ is the smallest $k$ such that $G$ has an acyclic $k$-coloring. 

\begin{theorem}[Ne\v{s}et\v{r}il and Raspaud 1998 \cite{impj}]\label{th acyclic}
If $G$ is an $(m,n)$-colored mixed graph for which the acyclic chromatic number of the underlying undirected graph is at most $k$, then 
$$\chi_{m,n}(G)\leq k (2m+n)^{(k-1)}.$$
\end{theorem}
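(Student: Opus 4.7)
The plan is to construct an $(m,n)$-colored mixed graph $H$ with $|V(H)|=k(2m+n)^{k-1}$ together with a colored homomorphism $f:G\to H$. Fix an acyclic $k$-coloring $\phi:V(G)\to\{1,\ldots,k\}$ of $U(G)$. By the definition of the acyclic chromatic number, for every pair of distinct colors $i<j$ the subgraph $G_{ij}$ induced by $\phi^{-1}(i)\cup\phi^{-1}(j)$ has a forest as its underlying graph.

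The central step is to build a labeling $\lambda_{ij}:V(G_{ij})\to\mathbb{Z}/(2m+n)\mathbb{Z}$. For each ordered pair $(i,j)$ with $i<j$, fix a bijection between the $2m+n$ adjacency types of an edge or arc between a class-$i$ and a class-$j$ vertex (recorded canonically, from the class-$i$ endpoint to the class-$j$ endpoint) and the elements of $\mathbb{Z}/(2m+n)\mathbb{Z}$, and for an edge $xy$ of $G_{ij}$ write $t(xy)$ for its canonical code. For each tree $T\subseteq G_{ij}$, root it arbitrarily, set the root label to $0$, and propagate along each tree edge $pv$ (with $p$ the parent) via
\[
\lambda_{ij}(v)\equiv\lambda_{ij}(p)+\varepsilon\,t(pv)\pmod{2m+n},
\]
where $\varepsilon=+1$ if $\phi(p)<\phi(v)$ and $\varepsilon=-1$ otherwise. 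Acyclicity of $T$ makes this well-defined, and a short case check yields the key identity $\lambda_{ij}(y)-\lambda_{ij}(x)\equiv t(xy)\pmod{2m+n}$ for every edge $xy\in G_{ij}$ with $\phi(x)<\phi(y)$, \emph{independent} of which endpoint plays the parent role.

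Take $H$ on vertex set $\{1,\ldots,k\}\times(\mathbb{Z}/(2m+n)\mathbb{Z})^{k-1}$, of cardinality $k(2m+n)^{k-1}$, and between any two vertices $(i,\vec{a}),(j,\vec{b})$ with $i<j$ insert the unique edge/arc whose canonical type is $\vec{b}_i-\vec{a}_j$. Set $f(v)=(\phi(v),(\lambda_{\phi(v),j}(v))_{j\neq\phi(v)})$; the key identity ensures that every edge or arc of $G$ is carried to an edge or arc of $H$ of the same adjacency type, so $f$ is a colored homomorphism and $\chi_{m,n}(G)\leq k(2m+n)^{k-1}$. The main obstacle is the forest labeling: the pair $(\lambda_{ij}(x),\lambda_{ij}(y))$, not either label alone, must encode the adjacency type of $xy$, and the encoding must survive the arbitrary choice of root. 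The class-comparison sign $\varepsilon$ is precisely the correction that makes label differences telescope along any path in $T$, and the acyclicity of $G_{ij}$ prevents any cycle-based inconsistency.
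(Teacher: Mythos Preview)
The paper does not supply its own proof of this theorem: it is quoted in the literature survey (Section~\ref{Chapter2_5}) as a result of Ne\v{s}et\v{r}il and Raspaud, with no argument given. Your construction is essentially the standard proof from that reference (generalizing the Raspaud--Sopena argument for the oriented case $(m,n)=(1,0)$): take an acyclic $k$-coloring of $U(G)$, label each two-color forest with elements of $\mathbb{Z}/(2m+n)\mathbb{Z}$ so that label differences along an edge encode its adjacency type, and map into the complete $k$-partite target on $\{1,\dots,k\}\times(\mathbb{Z}/(2m+n)\mathbb{Z})^{k-1}$. The sign $\varepsilon$ in your propagation rule is exactly the device that makes the encoding independent of the choice of root, and acyclicity of each $G_{ij}$ is what makes the propagation consistent.

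One notational point worth cleaning up: you define $\lambda_{ij}$ only for ordered pairs with $i<j$, but then write $f(v)=(\phi(v),(\lambda_{\phi(v),j}(v))_{j\neq\phi(v)})$, which asks for $\lambda_{\phi(v),j}$ even when $j<\phi(v)$. The intended reading---$\lambda_{ij}$ is the labeling on the forest induced by the unordered pair $\{i,j\}$---makes everything work, but it should be said explicitly. With that understood, the verification that $\vec{b}_i-\vec{a}_j=\lambda_{ij}(y)-\lambda_{ij}(x)=t(xy)$ goes through and the proof is correct.
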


Fabila-Monroy, Flores, Huemer and Montejano~\cite{flores} showed the tightness of the bound.

\begin{theorem}[Fabila-Monroy, Flores, Huemer and Montejano 2008 \cite{flores}]
For every $k\geq 3$ and every $m,n\geq 0$, 
$$\displaystyle \chi_{m,n}(\mathcal{A}_k)=k(2m+n)^{(k-1)},$$ where $\mathcal{A}_k$ is the family of graphs with acyclic chromatic number at most k.
\end{theorem}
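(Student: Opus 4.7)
The upper bound $\chi_{m,n}(\mathcal{A}_k) \leq k(2m+n)^{k-1}$ is already supplied by Theorem \ref{th acyclic}, so the content of this theorem lies in producing a matching lower bound. My plan is to exhibit, for each $k \geq 3$ and each $(m,n)$ with $q := 2m+n \geq 1$, an explicit $(m,n)$-colored mixed graph $T$ on $N := k q^{k-1}$ vertices that simultaneously has acyclic chromatic number at most $k$ and is an $(m,n)$-clique. Since being an $(m,n)$-clique of order $N$ forces $\chi_{m,n}(T) = N$, this will establish $\chi_{m,n}(\mathcal{A}_k) \geq N$, matching the upper bound.

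The construction I propose partitions $V(T)$ into $k$ independent layers $V_1, \ldots, V_k$, each of size $q^{k-1}$, with the vertices in $V_i$ labelled by profile vectors $\mathbf{a} = (a_j)_{j \in [k] \setminus \{i\}} \in [q]^{k-1}$. Between each pair of layers $V_i, V_j$ I would place a forest of edges/arcs whose adjacency types are determined by the two relevant profile coordinates $a_j$ and $b_i$ via an explicitly chosen encoding. The design goals are (a) the bipartite graph between any two layers is a forest, so that $V_1, \ldots, V_k$ is an acyclic $k$-coloring of $U(T)$; and (b) any two distinct vertices $u, v$ in the same layer $V_i$, which by definition differ in some coordinate $j$, share a common neighbor in $V_j$ with the adjacency types on the two sides distinct, producing a special $2$-path. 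Combined with either direct edges or analogous $2$-paths through a third layer for cross-layer pairs, Lemma \ref{lem structure} then certifies that $V(T)$ is an $(m,n)$-clique.

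The heart of the argument, and the main obstacle, is reconciling sparsity demand (a) with richness demand (b): a forest between two layers of size $q^{k-1}$ has at most $2q^{k-1}-1$ edges, yet its common-neighbor structure must still distinguish all $\binom{q^{k-1}}{2}$ same-layer pairs by distinct adjacency types. A short double counting shows the budget is tight but satisfiable, and I expect the cleanest realization to use between each pair of layers a \emph{double star} --- a star centered in $V_j$ covering all of $V_i$, together with a star centered in $V_i$ covering all of $V_j$ --- whose leaf edges carry adjacency types read off from the relevant profile coordinate. Ensuring that the two centers' shared edge is consistent with both star rules, and that every cross-layer pair of distinct vertices not directly adjacent is nevertheless covered by at least one special $2$-path across the $\binom{k}{2}$ double stars, is where I anticipate the most delicate combinatorial bookkeeping.
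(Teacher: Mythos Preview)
The paper does not prove this theorem. It appears in Section~\ref{Chapter2_5} (Literature Survey) purely as a cited result from Fabila-Monroy, Flores, Huemer and Montejano, quoted to record that the Ne\v{s}et\v{r}il--Raspaud bound of Theorem~\ref{th acyclic} is tight. No argument is given or sketched in the present paper, so there is nothing here to compare your attempt against; a comparison would have to be made with the original source~\cite{flores}.

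On the substance of your outline: the overall shape---upper bound from Theorem~\ref{th acyclic}, lower bound via an explicit $(m,n)$-clique on $kq^{k-1}$ vertices admitting an acyclic $k$-coloring---is the natural strategy. The genuine gap is the cross-layer case. With one double star between each pair $(V_i,V_j)$, a generic $u\in V_i$ and $v\in V_j$ (neither a star center) are non-adjacent, and your plan to route them through a third layer $V_\ell$ requires $u$ and $v$ to share a common neighbor there. But the center in $V_\ell$ of the $(V_i,V_\ell)$-star need not coincide with the center in $V_\ell$ of the $(V_j,V_\ell)$-star; and even if you force a single hub per layer, the resulting special $2$-path through $V_\ell$ only distinguishes $u$ and $v$ when $u_\ell\neq v_\ell$, which can fail for all $\ell\notin\{i,j\}$ simultaneously. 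So as written the construction does not yet yield an $(m,n)$-clique, and the ``delicate combinatorial bookkeeping'' you anticipate is in fact where the real idea must go. Until that is resolved, this is a plan rather than a proof.
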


Ne\v{s}et\v{r}il and Raspaud~\cite{impj}  showed that 
the $(m,n)$-colored mixed chromatic number 
of a graph is bounded by a function of its acyclic 
chromatic number. 
 Das, Nandi and Sen~\cite{nandi} showed that the reverse type of bound also exists, that is, 
 the acyclic 
chromatic number 
of a graph is bounded by a function of its 
$(m,n)$-colored mixed chromatic number.

\begin{theorem}[Das, Nandi and Sen 2016~\cite{nandi}]
Let $G$ be an $(m,n)$-colored mixed graph with $arb(G) = r$ and $\chi_{(m,n)}G=k$ where $p=(2m + n)\geq 2$. Then 
$$\displaystyle \chi_a(G)\leq k^2+k^{2+\lceil log_2~ log_p k\rceil}.$$
\end{theorem}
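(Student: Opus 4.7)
My plan is to turn the homomorphism $f : G \to H$ guaranteed by $\chi_{m,n}(G)=k$ into an acyclic coloring of $U(G)$ by refining the $k$-coloring supplied by $f$. Because $U(H)$ is simple, no two adjacent vertices of $G$ share an $f$-color; hence the $k$ preimage classes $f^{-1}(h)$ with $h\in V(H)$ are already independent, and the first axiom of an acyclic coloring holds for free. The remaining work is to refine so that for every unordered pair of colors $\{h_1,h_2\}$ the bipartite subgraph of $U(G)$ spanning $f^{-1}(h_1)\cup f^{-1}(h_2)$ is a forest. A useful structural observation is that all edges between $f^{-1}(h_1)$ and $f^{-1}(h_2)$ share the same adjacency type (the type of $h_1h_2$ in $H$), so the potential obstructions are just ordinary bipartite cycles, not cycles that mix adjacency types.

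As a first refinement I would attach to every vertex $v$ a secondary label $g_0(v)$ from a palette of size $k$, chosen so that any bicolored $4$-cycle is forced to identify two distinct vertices under the combined $(f,g_0)$-label. Using that each vertex has only $k-1$ possible "outside" colors and that the $(m,n)$-adjacency type between any two color classes is fixed, such a $g_0$ can be built greedily or from a colored homomorphism of a local auxiliary graph. This costs a multiplicative factor of $k$ and kills every bichromatic $4$-cycle, accounting for the $k\cdot k=k^2$ summand in the bound.

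To eliminate longer bichromatic cycles I would iterate: at round $i\geq 1$, introduce a new label $g_i(v)$ that depends on the $g_{i-1}$-labels appearing in a bounded neighborhood of $v$. The bookkeeping to track is that each round at least doubles the length of the shortest surviving bichromatic cycle while multiplying the palette by at most $k$. Because the number of distinct local neighborhood configurations is controlled by $p=2m+n$, and because any surviving cycle length is ultimately capped in terms of $p^k$, after $t=\lceil \log_2 \log_p k\rceil$ rounds every bichromatic cycle must be broken. Multiplying the $k$-palette by $k^t$ yields the second summand $k^{\,2+\lceil \log_2\log_p k\rceil}$, and summing the two contributions gives the stated bound.

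The principal obstacle is the iteration step. One has to pin down a concrete definition of $g_i$ in terms of $g_{i-1}$ and the local neighborhood, then verify the monovariant "the shortest surviving bichromatic cycle doubles per round" while keeping the palette cost at a single factor of $k$ per round. The role of the arboricity hypothesis $\mathrm{arb}(G)=r$ seems to be precisely to control how much "neighborhood information" is needed in each $g_i$: a decomposition of $U(G)$ into $r$ forests would let one encode $g_i(v)$ using a bounded number of labels per forest, so the doubling argument closes in $\log\log$ rounds rather than $\log$ rounds. Making that encoding match the exponent $2+\lceil \log_2\log_p k\rceil$ exactly is, I expect, the delicate part of the argument; the independence of color classes and the single-adjacency-type property of bichromatic pairs are the two facts that do all the structural work.
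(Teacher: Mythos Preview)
The paper does not prove this theorem at all: it appears in the literature survey (Section~\ref{Chapter2_5}) as a citation of \cite{nandi}, with no argument supplied. So there is nothing to compare your proposal against in this paper; any actual proof would have to be checked against the original Das--Nandi--Sen article.

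That said, what you have written is not a proof but an outline with self-acknowledged holes, and the holes are the entire substance. You correctly observe that the $k$ color classes $f^{-1}(h)$ are independent and that any bichromatic subgraph has all its edges of a single adjacency type; those are the easy parts. But from there on nothing is pinned down. You never construct $g_0$; ``can be built greedily or from a colored homomorphism of a local auxiliary graph'' is an assertion, not an argument, and it is not clear that a single extra factor of $k$ suffices to kill all bichromatic $4$-cycles. The iteration is worse: you state that each round doubles the shortest surviving bichromatic cycle while costing only a factor of $k$, but give no mechanism for either claim. Most seriously, the termination bound rests on the sentence ``any surviving cycle length is ultimately capped in terms of $p^k$,'' which has no justification---a bichromatic bipartite subgraph of $U(G)$ can contain cycles of arbitrary length regardless of $p$ or $k$, so you need an argument specific to the refined coloring, and you have not supplied one.

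Finally, note that the hypothesis $\mathrm{arb}(G)=r$ does not appear in the conclusion; this is almost certainly a copy-paste artifact in the paper's statement (compare the next theorem cited, where $r$ does appear in the bound). Your speculation about arboricity controlling the iteration is therefore aimed at a phantom hypothesis.
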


The \textit{arboricity} $arb(G)$ of a graph $G$ is the minimum $k$ such that the edges of $G$ can be decomposed into $k$ forests.

Das, Nandi and Sen~\cite{nandi} also showed that showed that the  
$(m,n)$-colored mixed chromatic number of a graph is 
bounded by a function of its acyclic chromatic number and arboricity which gives a better bound than the one given by 
the above result.

\begin{theorem}[Das, Nandi and Sen 2016~\cite{nandi}]
Let $G$ be an $(m,n)$-colored mixed graph with $arb(G) = r$ and $\chi_{(m,n)}G=k$ where $p=(2m + n)\geq 2$. Then $$\displaystyle \chi_a(G)\leq k^{\lceil \log_p r \rceil+1}.$$
\end{theorem}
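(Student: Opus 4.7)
The plan is to construct an acyclic vertex coloring of $U(G)$ with at most $k^{\lceil \log_p r\rceil + 1}$ colors, starting from a homomorphism $f\colon G \to H$ with $|V(H)| = k$ (witnessing $\chi_{m,n}(G) = k$) and from a forest decomposition $E(G) = F_1 \cup \cdots \cup F_r$ (from $\mathrm{arb}(G) = r$). First, I would orient each $F_i$ as a union of rooted arborescences, so that every non-root vertex $v$ has a unique parent $p_i(v)$ in $F_i$. Setting $t = \lceil \log_p r\rceil$, so that $p^t \geq r$, I would label the $r$ forests by distinct strings in $\{0,1,\ldots,p-1\}^t$.

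Next I would refine $f$ into a coloring $\psi\colon V(G) \to V(H)^{t+1}$ of the form $\psi(v) = (f(v),\, g_1(v),\, \ldots,\, g_t(v))$, where each auxiliary coordinate $g_j(v) \in V(H)$ aggregates the $f$-values of the parents $p_i(v)$ across those forest indices whose $j$-th base-$p$ digit takes a specified value, the choice of forest being arbitrated by $f(v)$ (or another small invariant of $v$). The total palette has size $k \cdot k^t = k^{\lceil \log_p r\rceil + 1}$. Properness is automatic: if $\psi(u) = \psi(v)$ for $uv \in E(U(G))$, then $f(u) = f(v)$, contradicting simplicity of $U(H)$ together with the homomorphism property.

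The heart of the argument is the acyclicity check. Assume toward a contradiction a bichromatic cycle $C$ under $\psi$, using alternating colors $L_1 = (a, \vec{\alpha})$ and $L_2 = (b, \vec{\beta})$. Then $C$ alternates between the $f$-classes $a$ and $b$, and every edge of $C$ carries the common adjacency type realised by the unique $U(H)$-edge between $a$ and $b$. Since $C$ is a cycle while each individual $F_i$ is acyclic, the edges of $C$ span at least two forests $F_{i_1}$ and $F_{i_2}$ whose indices differ in some digit $j$. The design of $g_j$ would then force a $g_j$-disagreement between some pair of same-$f$-class vertices on $C$, contradicting the fact that $C$ uses only the two $\psi$-colors $L_1$ and $L_2$.

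The main obstacle is engineering the coordinates $g_j$ so that this final implication is uniform across all potential bichromatic cycles. The challenge is to pack, into just $t$ coordinates valued in $V(H)$, enough data about up to $r$ forest-parents per vertex to detect any pair of forest indices disagreeing at digit $j$. A natural recipe is to set $g_j(v) = f\bigl(p_{i(v,j)}(v)\bigr)$ for a canonically chosen index $i(v,j)$ whose $j$-th base-$p$ digit is a prescribed function of $f(v)$; then monochromaticity of $g_j$ along $C$ would force all edges of $C$ in the relevant digit-class to lie in a single forest, delivering the desired contradiction.
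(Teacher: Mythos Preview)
The paper does not actually prove this theorem: it appears in Section~\ref{Chapter2_5} (Literature Survey) as a result quoted from \cite{nandi}, stated without proof alongside the other cited theorems. There is therefore no in-paper argument to compare your proposal against.

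On its own merits, your overall architecture is the right shape for results of this kind: take the homomorphism $f$ as a first coordinate and append $t=\lceil\log_p r\rceil$ auxiliary coordinates to destroy $f$-bichromatic cycles, using the key observation that in any $f$-bichromatic cycle on colours $a,b$ every edge carries the single adjacency type of the $ab$-edge in $H$. But the concrete recipe you propose at the end, $g_j(v)=f\bigl(p_{\,i(v,j)}(v)\bigr)$ with $i(v,j)$ determined by $f(v)$, does not deliver the conclusion you claim. For two vertices $u,w$ on a bichromatic cycle $C$ with $f(u)=f(w)$ you then select the \emph{same} forest index $i^{*}$, and $g_j$ records $f$ of their $F_{i^{*}}$-parents; yet those parents need have nothing to do with $C$ (the edges of $C$ may all lie in forests other than $F_{i^{*}}$), so there is no mechanism forcing $g_j(u)\neq g_j(w)$. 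The assertion that ``monochromaticity of $g_j$ along $C$ would force all edges of $C$ in the relevant digit-class to lie in a single forest'' is therefore unjustified by this definition. You have correctly located the crux --- packing forest-index information into $t$ coordinates of size $k$ --- but the sketch stops precisely where the real work begins, so as written it is a plan with a genuine gap at its centre rather than a proof.
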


 Das, Nandi and Sen~\cite{nandi} further showed that the  arboricity
of a graph is bounded by a function of its 
$(m,n)$-colored mixed chromatic number.

\begin{theorem}[Das, Nandi and Sen 2017~\cite{nandi}]
Let $G$ be an $(m,n)$-colored mixed graph with 
$\chi_{(m,n)}(G)=k$.
Then $$arb(G) \leq \lceil \log_p k+\frac{k}{2} \rceil$$
 where $p=(2m + n)\geq 2$.
\end{theorem}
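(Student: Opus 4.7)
By Nash--Williams' arboricity formula,
\[
arb(G) \;=\; \max_{H \subseteq G}\, \left\lceil \frac{|E(H)|}{|V(H)|-1} \right\rceil,
\]
so it suffices to show $|E(H)| \leq (|V(H)|-1)(\log_p k + k/2)$ for every subgraph $H$ of $G$. Since $\chi_{m,n}(H) \leq \chi_{m,n}(G) = k$, there is a colored homomorphism $f : H \to T$ with $|V(T)| \leq k$, and by minimality $T$ may be chosen to be an $(m,n)$-clique; the corollary to Lemma~\ref{lem structure} then guarantees that every non-adjacent pair of vertices of $T$ is joined by a special $2$-path.

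The setup step is to read off the structure induced by $f$ on $H$. Letting $V_i = f^{-1}(i)$ partitions $V(H)$ into at most $k$ independent sets (no two vertices of the same $V_i$ can be adjacent, else the image would be a loop), and for each pair $\{i,j\}$ with $ij \in E(T) \cup A(T)$ the edges of $H$ between $V_i$ and $V_j$ form a bipartite graph $B_{ij}$ whose edges all share one adjacency type. Hence
\[
|E(H)| \;=\; \sum_{ij \in E(T) \cup A(T)} |E(B_{ij})|,
\]
and the task reduces to bounding this sum.

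My plan is to produce two disjoint coverings of $E(H)$ by forests whose totals combine additively into the stated bound. The $k/2$ contribution I aim to extract from a Vizing-style edge coloring of $U(T)$ (whose maximum degree is at most $k-1$): this partitions $E(U(T))$ into at most $\lceil k/2 \rceil$ matching-like blocks, each of which lifts through $f$ to a vertex-disjoint union of bipartite pieces of $H$ that can be covered by a single forest. The $\log_p k$ contribution I aim to extract from a recursive refinement of the color classes using the special $2$-path characterization: because only $p = 2m+n$ distinct adjacency types are available, the ``signature'' of a vertex with respect to a reference class takes at most $p$ values, so each refinement step reduces the problem size by a factor of $p$ and the recursion terminates in $\lceil \log_p k \rceil$ rounds.

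The main obstacle I expect is the calibration between the two contributions. A naive combination tends to produce either a multiplicative $O(k \log_p k)$ bound or a polynomial $O(k^2)$ bound, whereas the stated bound is additive. Securing the additive form requires applying the matching lift and the recursive refinement to disjoint portions of $E(H)$ so that each edge is charged to only one of the two covers, and it also requires care in the refinement step so that the bipartite pieces produced at each level are themselves coverable by a single forest rather than several. Making the charging precise, so that the $p$-way branching coincides with rather than compounds the arboricity contributions from the Vizing stage, is the delicate heart of the proof.
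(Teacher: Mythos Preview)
The paper does not prove this theorem at all: it appears in Section~\ref{Chapter2_5} (Literature Survey) as a result quoted from~\cite{nandi}, with no accompanying argument. There is therefore no ``paper's own proof'' to compare your proposal against.

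Evaluating your sketch on its own merits, there is a genuine gap in the $k/2$ half of the plan. You write that an edge-coloring of $U(T)$ partitions its edges into matching-like blocks, ``each of which lifts through $f$ to a vertex-disjoint union of bipartite pieces of $H$ that can be covered by a single forest.'' This is false in general. The preimage in $H$ of a single edge $ij$ of $T$ is the bipartite graph $B_{ij}$ on parts $V_i, V_j$; the fact that all its edges share one adjacency type places no constraint whatsoever on its density, so $B_{ij}$ may be a complete bipartite graph $K_{t,t}$, which requires $\lceil t/2 \rceil$ forests to cover, not one. Consequently the lift of a matching of $T$ can have arbitrarily large arboricity, and the intended contribution of $\lceil k/2 \rceil$ forests collapses. (Separately, Vizing gives roughly $k$ matchings for $U(T)$, not $\lceil k/2\rceil$; a decomposition of $K_k$ into $\lfloor k/2\rfloor$ Hamiltonian paths is presumably what you have in mind, but the same objection applies to the lift of a path.)

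The $\log_p k$ half is only described at the level of a slogan (``signatures take at most $p$ values, so recursion terminates in $\lceil \log_p k\rceil$ rounds''); as written there is no mechanism that turns a round of refinement into a single spanning forest of $H$, and you yourself flag the calibration between the two parts as the unresolved ``delicate heart of the proof.'' So the proposal is a heuristic outline rather than a proof, and its central structural claim about lifted matchings is incorrect. To make progress you would need an argument that controls the \emph{density} of each $B_{ij}$ (or of suitable unions of them), not merely their bipartiteness; the homomorphism alone does not supply such a bound.
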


A \textit{planar graph} is a graph that  can be drawn on the plane in such a way that its edges intersect only at the vertices. In other words, it can be drawn in such a way that no edges cross each other.

The celebrated result due to Borodin shows that a planar graph has acyclic chromatic number at most 5.

\begin{theorem}[Borodin 1979 \cite{boro}]\label{th borodin}
For the family $\mathcal{P}$ of planar graphs
$$\chi_a(\mathcal{P})\leq 5.$$
\end{theorem}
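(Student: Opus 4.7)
The plan is to prove Borodin's theorem by the classical method of reducible configurations combined with a discharging argument on plane embeddings. Let $G$ be a hypothetical counterexample of minimum order: a planar graph whose acyclic chromatic number exceeds $5$, chosen so that every planar graph with fewer vertices admits a proper acyclic $5$-coloring. Fix a plane embedding of $G$. The goal is to derive a contradiction by exhibiting a local configuration that both (i) must appear somewhere in $G$ on structural grounds, and (ii) allows us to extend an acyclic $5$-coloring from a suitable minor to $G$, violating minimality.

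The first main step is to build a catalogue of \emph{reducible configurations}. Typical candidates include a vertex of degree at most $3$; a vertex of degree $4$ whose neighbors satisfy appropriate degree constraints; and various triangle- and $4$-cycle-based patterns in which a few vertices of small degree are clustered. For each configuration $C$, I would argue as follows: delete or contract a carefully chosen set of vertices/edges to obtain a smaller planar graph $G'$; apply minimality to acyclically $5$-color $G'$; and then show by a finite case analysis that the coloring of $G'$ extends to the missing vertices of $G$. The extension must respect two conditions simultaneously, namely that adjacent vertices receive different colors and that no two color classes induce a cycle through the newly colored vertices. The delicate point is controlling the bicolored cycles: when recoloring a neighborhood, one has to verify that every pair of colors restricted to the closed neighborhood induces a forest, which typically forces a branching case analysis over the coloring pattern on the boundary of $C$.

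The second main step is the discharging phase. Assign to each vertex $v$ the initial charge $d(v) - 4$ and to each face $f$ of length $\ell(f)$ the charge $\ell(f) - 4$. By Euler's formula $|V| - |E| + |F| = 2$ combined with $\sum_v d(v) = \sum_f \ell(f) = 2|E|$, the total initial charge equals $-8$, hence is negative. I would then design a small set of local discharging rules that move charge from faces and high-degree vertices toward low-degree vertices and short faces. Assuming $G$ contains none of the reducible configurations from the first step, the task is to verify, face by face and vertex by vertex, that the final charge of every element is nonnegative, contradicting the negative total. This forces $G$ to contain a reducible configuration, contradicting the choice of $G$.

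The hard part is unquestionably step one: assembling a list of configurations that is simultaneously \emph{rich enough} for the discharging argument to close and \emph{small enough} that each configuration can actually be reduced while preserving acyclicity of bicolored subgraphs. Ordinary proper colorings extend far more easily than acyclic ones, and the bookkeeping of potential bicolored cycles through neighborhoods of recolored vertices is what makes Borodin's argument substantially harder than, say, the analogous bound for proper $5$-coloring of planar graphs. I would expect to spend the bulk of the work there, while the discharging step, once the configurations are fixed, is a relatively mechanical verification over the finitely many local types.
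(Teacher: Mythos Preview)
The paper does not prove this statement at all: it appears in the literature survey (Section~\ref{Chapter2_5}) and is merely quoted as Borodin's 1979 result, with no argument given. There is therefore no ``paper's own proof'' to compare your proposal against.

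As for the proposal itself, what you have written is an accurate high-level description of Borodin's original strategy---a minimal counterexample, a catalogue of reducible configurations, and a discharging argument via Euler's formula with initial charges $d(v)-4$ and $\ell(f)-4$. But it is a plan, not a proof. The entire substance of Borodin's theorem lies in the two places you explicitly leave blank: the specific list of reducible configurations (with full reductions verifying that no bicolored cycle is created) and the specific discharging rules (with verification that every element ends nonnegative in the absence of those configurations). You correctly flag that this is ``the hard part,'' but you supply none of it; Borodin's paper devotes many pages of case analysis precisely here, and no shortcut is known. So your outline is correct in spirit and matches the method the cited paper uses, but as written it does not constitute a proof---it is a description of what a proof would look like.
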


 Using Theorem~\ref{th acyclic} and~\ref{th borodin}, Ne\v{s}et\v{r}il and Raspaud  \cite{impj} established the following bound.

\begin{theorem}[Ne\v{s}et\v{r}il and Raspaud 1998 \cite{impj}]
For the family $\mathcal{P}$ of planar graphs
$$\chi_{m,n}(\mathcal{P})\leq 5(2m+n)^4.$$
\end{theorem}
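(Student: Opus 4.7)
The plan is to combine Borodin's theorem (Theorem~\ref{th borodin}) with the acyclic-chromatic-number bound (Theorem~\ref{th acyclic}); the whole argument reduces to an arithmetic substitution.

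First I would take an arbitrary $(m,n)$-colored mixed graph $G$ whose underlying simple graph $U(G)$ is planar. By Theorem~\ref{th borodin}, $U(G)$ admits an acyclic proper vertex coloring with at most $5$ colors, i.e.\ $\chi_a(U(G)) \leq 5$.

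Next I would feed this into Theorem~\ref{th acyclic} with $k = 5$. Since the hypothesis ``acyclic chromatic number of $U(G)$ is at most $k$'' is satisfied with $k=5$, the conclusion of Theorem~\ref{th acyclic} gives
$$\chi_{m,n}(G) \;\leq\; 5\,(2m+n)^{5-1} \;=\; 5(2m+n)^{4}.$$
Since $G$ was an arbitrary $(m,n)$-colored mixed graph with $U(G) \in \mathcal{P}$, taking the maximum over the family yields $\chi_{m,n}(\mathcal{P}) \leq 5(2m+n)^4$, which is the claimed bound.

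The main obstacle does not live in this proof, but inside the two ingredients: Borodin's planar acyclic-coloring result is genuinely deep, and Ne\v{s}et\v{r}il--Raspaud's reduction from $(m,n)$-colored mixed chromatic number to acyclic chromatic number constructs an explicit homomorphic image of size $k(2m+n)^{k-1}$ by indexing target vertices with pairs (color class, vector of adjacency types toward the remaining $k-1$ classes) and using the fact that any two color classes induce a forest. Once those two theorems are taken as given, the present statement is a one-line corollary, so no further work or case analysis is required.
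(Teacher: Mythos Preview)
Your proposal is correct and matches the paper's own treatment exactly: the paper simply states that the bound follows from combining Theorem~\ref{th acyclic} with Theorem~\ref{th borodin}, which is precisely the substitution $k=5$ that you carry out. No further argument is given or needed.
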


A \textit{$k$-tree} is a simple graph obtained from the
complete graph $K_k$ by repeatedly inserting new vertices adjacent to all vertices of
an existing clique of order $k$. A \textit{partial $k$-tree} is a subgraph of some $k$-tree.

An \textit{outerplanar graph} is a planar graph that can be embedded in the plane without crossings in such
a way that all the vertices lie in the unbounded face of the embedding. It is known that an outerplanar graph is also a partial 2-tree.

\begin{theorem}[Fabila-Monroy, Flores, Huemer and Montejano 2008 \cite{flores}]
Denote the families of partial k-trees, outerplanar and planar graphs by $\mathcal{T}_{k},\mathcal{O}~and~\mathcal{P}$ respectively. Let $\epsilon =1$  for 
$n$ odd or 
$n = 0$, and $ \epsilon =2 $ for $n > 0$ even. Then:

\begin{itemize}
\item[$(i)$] $(2m+n)^{k}+\epsilon (2m+n)^{(k-1)}+(2m+n)^{(k-2)}+\cdots+1\leq \chi_{m,n}(\mathcal{T}^{k})$,

\item[$(ii)$] $(2m+n)^2+\epsilon(2m+n)+1\leq \chi_{m,n}(\mathcal{O})$

\item[$(iii)$] $(2m+n)^3+\epsilon(2m+n)^2+(2m+n)+1\leq \chi_{m,n}(\mathcal{P})$
\end{itemize}
\end{theorem}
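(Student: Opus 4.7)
The plan is to prove (i) by an explicit inductive construction, and to deduce (ii) and (iii) by realizing the $k=2$ and $k=3$ instances of that construction inside the outerplanar and planar families, respectively. Since maximal outerplanar graphs are $2$-trees and since planar $3$-trees (e.g.\ stacked triangulations / Apollonian networks) exist, the main work is to design the $k$-th construction so that a suitable embedding is available when $k \in \{2,3\}$.

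Set $p := 2m+n$ and $f(k) := p^k + \epsilon p^{k-1} + p^{k-2} + \cdots + 1$. A short calculation gives the recurrence $f(k) = p \cdot f(k-1) + 1$ with $f(1) = p + \epsilon$. I would construct, by induction on $k$, an $(m,n)$-colored partial $k$-tree $G_k$ carrying a relative $(m,n)$-clique $R_k \subseteq V(G_k)$ of cardinality $f(k)$; the Bensmail--Duffy--Sen inequality $\chi_{m,n}(G_k) \geq \omega_{r(m,n)}(G_k)$ then yields (i).

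For the base case $k=1$, use the colored path $G_1$ provided by Theorem~2.2 (Fabila-Monroy et al.), which achieves $\chi_{m,n}(G_1) = p + \epsilon$ and is a partial $1$-tree. For the inductive step, I would take a new vertex $r$ and $p$ disjoint copies $G_{k-1}^{(1)}, \ldots, G_{k-1}^{(p)}$ of $G_{k-1}$, join $r$ to a fixed distinguished vertex of each copy using all $p$ possible adjacency types (a different one for each copy), and further share a common $(k-1)$-clique through $r$ so that every newly added vertex extends a $k$-clique of the current graph, keeping the tree-width at most $k$. Setting $R_k := \{r\} \cup \bigcup_i R_{k-1}^{(i)}$ gives $|R_k| = 1 + p \cdot f(k-1) = f(k)$.

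To verify that $R_k$ is a relative clique, I would invoke Lemma~\ref{lem structure} in three cases: two vertices inside the same copy (handled by induction); a vertex in a copy paired with $r$ (either adjacent to $r$ or joined to $r$ by a special $2$-path through the distinguished attachment vertex, the attachment adjacency having a type distinct from the relevant internal adjacency); and two vertices in different copies (connected by a special $2$-path through $r$, since the two attachment adjacencies at $r$ use distinct types by construction). The main obstacle is to simultaneously arrange the attachments so that all three separations succeed and the tree-width stays at most $k$, and additionally to verify that the $k=2$ and $k=3$ specializations can be embedded as an outerplanar graph and a planar graph respectively; this is the crucial geometric check for (ii) and (iii). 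A secondary subtlety, already handled by Theorem~2.2, is the even-positive-$n$ regime, where the base case must contribute $\epsilon = 2$ rather than $1$ so that the recursion $f(k) = p \cdot f(k-1) + 1$ seeds correctly.
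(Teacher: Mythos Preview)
The paper does not prove this theorem; it is quoted in the literature survey (Section~\ref{Chapter2_5}) as a result of Fabila-Monroy, Flores, Huemer and Montejano, with no argument supplied. So there is no proof in the paper to compare your proposal against.

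Independently of that, your strategy cannot succeed. You plan to bound $\chi_{m,n}$ from below by exhibiting a relative $(m,n)$-clique of size $f(k)$ and invoking $\omega_{r(m,n)} \le \chi_{m,n}$. But the very paper you are working in records (the Bensmail--Duffy--Sen result stated just after Theorem~2.12) that $\omega_{r(m,n)}(\mathcal{O}) = 3p+1$ \emph{exactly}, whereas the target in (ii) is $p^2 + \epsilon p + 1$, which already exceeds $3p+1$ for $p \ge 3$. No outerplanar graph can carry a relative clique of the required size, so the approach is blocked for (ii) and hence for the $k=2$ instance of (i). The failure is visible at both ends of your induction: in the base case, a relative clique in a path has at most three vertices (any two vertices at distance $\ge 3$ cannot see each other), not $p+\epsilon$; Theorem~2.2 gives $\chi_{m,n}$ of the path, not its relative clique number. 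In the inductive step, joining $r$ to a single distinguished vertex per copy leaves a non-distinguished vertex of $R_{k-1}^{(i)}$ at distance at least $3$ from every vertex of $R_{k-1}^{(j)}$ for $j\ne i$, so Lemma~\ref{lem structure} provides no special $2$-path and the cross-copy pairs are not in a common relative clique. The lower bounds in \cite{flores} must be argued directly at the level of $\chi_{m,n}$ (e.g.\ by counting colour classes forced by the levels of a $k$-tree), not routed through $\omega_{r(m,n)}$.
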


Another interesting general result regarding $(m,n)$-colored mixed chromatic number is a bound with respect to 
the maximum degree of the graph.

\begin{theorem}[Das, Nandi and Sen 2016~\cite{nandi}]
For the family $G_{\Delta}$ of graphs with maximum degree $\Delta $, we have

 $$\displaystyle p^{\frac{\Delta}{2}}\leq \chi_{(m,n)}(\mathcal{G}_{\Delta})\leq 2(\Delta -1)^p p^{(\Delta -p+2)} +2$$ for all $p = 2m + n \geq2$ and for all $\Delta \geq 5$.
\end{theorem}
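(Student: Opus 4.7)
The plan is to establish the lower and upper bounds by separate arguments.

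For the lower bound $p^{\Delta/2} \leq \chi_{m,n}(\mathcal{G}_\Delta)$, I would construct an explicit $(m,n)$-colored mixed graph $G^{\star}$ with $\Delta(G^{\star}) \leq \Delta$ containing a relative $(m,n)$-clique $R$ of size $p^{\Delta/2}$; Lemma~\ref{lem structure} together with the inequality $\omega_{r(m,n)}(G) \leq \chi_{m,n}(G)$ from the introduction then yields the desired bound. Writing $k = \Delta/2$ and identifying the $p^{k}$ vertices of $R$ with tuples in $[p]^{k}$ (one coordinate per adjacency type), the idea is to attach auxiliary vertices organized in $k$ "coordinate layers", so that for every pair $u,v \in R$ there is a special 2-path through an auxiliary vertex in the layer corresponding to the first coordinate on which $u$ and $v$ differ. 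The main check is that every primary vertex uses one incidence per coordinate (contributing $k = \Delta/2$ to its degree), and every auxiliary vertex can be arranged to have degree bounded above by $\Delta$ through a careful distribution of the connections.

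For the upper bound, I would define a universal target $T$ of order $N := 2(\Delta-1)^{p} p^{\Delta - p + 2} + 2$ and embed an arbitrary $G \in \mathcal{G}_\Delta$ into $T$ by a greedy homomorphism. Processing the vertices of $G$ in any order and at each step mapping the current vertex $v$ to a vertex of $T$ whose adjacency profile with the already-chosen images of $v$'s neighbors matches the profile of $v$ in $G$, the task reduces to showing that $T$ has the following \emph{neighborhood-completion} property: for every sequence of $\Delta$ vertices $y_{1}, \dots, y_{\Delta} \in V(T)$ with prescribed adjacency types $t_{1}, \dots, t_{\Delta}$, there exists $x \in V(T)$ realizing all of them simultaneously. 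To bound the number of forbidden images, I would group the (at most $\Delta$) already-colored neighbors by their adjacency type to $v$: there are $p$ types distributed among $\Delta$ slots, a worst-case pigeonhole analysis of such distributions yields the combinatorial factor $(\Delta - 1)^{p}$ per pattern and an auxiliary factor $p^{\Delta - p + 2}$ for the remaining type choices; the leading $2$ and additive $+2$ provide the slack needed to guarantee existence of a valid image in $T$ in all configurations.

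The main obstacle is the upper bound, specifically the construction (or probabilistic verification) of a target graph $T$ of size exactly $N$ with the neighborhood-completion property. This typically requires either a random coloring of a complete graph of the appropriate size combined with a first-moment argument, or an explicit algebraic construction (for instance a Cayley-type graph over $\mathbb{Z}_{p}^{\Delta}$). A secondary subtlety is the careful bookkeeping behind the exponents: showing that the "bad" patterns of neighbor types concentrate in a set of size $(\Delta-1)^{p} p^{\Delta - p + 2}$, rather than the naive $p^{\Delta}$, is where the hypothesis $\Delta \geq 5$ (ensuring $\Delta \geq p$ in the regimes of interest and keeping the exponent $\Delta - p + 2$ meaningful) plays its role.
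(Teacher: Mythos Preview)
The paper does not actually prove this theorem: it appears in Section~\ref{Chapter2_5} (Literature Survey) as a cited result from Das, Nandi and Sen~\cite{nandi}, stated without proof. So there is no ``paper's own proof'' to compare your proposal against.

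That said, your lower-bound strategy has a genuine gap that would make it fail regardless. You propose to exhibit a relative $(m,n)$-clique of size $p^{\Delta/2}$ inside some graph of maximum degree $\Delta$, and then invoke $\omega_{r(m,n)}(G)\le\chi_{m,n}(G)$. But this is impossible: Theorem~\ref{th rel delta} of the present paper shows
\[
\omega_{r(m,n)}(\mathcal G_\Delta)\le \Big\lfloor\frac{p-1}{p}\Delta^2\Big\rfloor+\Delta+1,
\]
which is only quadratic in $\Delta$. More concretely, in a graph of maximum degree $\Delta$ a vertex can see at most $\Delta$ neighbours and at most $\Delta(\Delta-1)$ vertices through special $2$-paths, so no relative clique can have order exceeding $\Delta^2+1$. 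Your construction sketch already hints at the problem (``every primary vertex uses one incidence per coordinate''): with only $\Delta/2$ incidences to auxiliary vertices, each of degree at most $\Delta$, a primary vertex can reach at most $\tfrac{\Delta}{2}(\Delta-1)$ others via special $2$-paths, nowhere near $p^{\Delta/2}-1$. The exponential lower bound on $\chi_{m,n}(\mathcal G_\Delta)$ therefore cannot come from a clique argument; in the original reference it is obtained by a different mechanism (a counting/probabilistic argument showing that no target of order below $p^{\Delta/2}$ can absorb all $(m,n)$-colorings of a suitable bounded-degree graph).

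Your upper-bound outline via a universal target with a neighbourhood-completion property is the standard shape of such arguments and is plausible in spirit, but as written it is only a plan: the existence of $T$ of the stated order, and the combinatorial bookkeeping producing the factor $(\Delta-1)^p p^{\Delta-p+2}$, are asserted rather than proved.
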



Finally we enter the territory of $(m,n)$-cliques. 
The first result due to 
Bensmail, Duffy and Sen~\cite{duffy} shows that 
$(m,n)$-cliques are not really rare objects.

\begin{theorem}[Bensmail, Duffy and Sen 2017 \cite{duffy}]
For $(m, n) \neq (0, 1)$ almost every $(m,n)$-colored mixed graph is an $(m,n)$-clique.
\end{theorem}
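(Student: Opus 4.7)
The plan is to use Corollary~1.2 and reduce the question to a standard Erd\H{o}s--R\'enyi style calculation. Set $p = 2m+n$; the hypothesis $(m,n) \neq (0,1)$ forces $p \geq 2$. I will work with the uniform random model on $N$ labelled vertices, in which each pair of vertices independently receives one of the $p+1$ possible states (non-adjacent, or one of the $p$ arc/edge types) with equal probability $1/(p+1)$. By Corollary~1.2, $G$ is an $(m,n)$-clique if and only if every pair of non-adjacent vertices is joined by a special $2$-path, so it suffices to show that the complementary event has probability $o(1)$ as $N \to \infty$.

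First, fix distinct vertices $u,v$ and a third vertex $w$. Viewed from $w$, each of $\{u,w\}$ and $\{v,w\}$ falls into one of exactly $p$ distinguishable adjacency types (one for each edge color, each incoming arc color, and each outgoing arc color at $w$), each with probability $1/(p+1)$, and is absent with probability $1/(p+1)$. The $2$-path $uwv$ is special precisely when both adjacencies are present and their types at $w$ differ, so
\[
\Pr[uwv \text{ is a special 2-path}] \;=\; \frac{p(p-1)}{(p+1)^2},
\]
and consequently
\[
\alpha \;:=\; \Pr[uwv \text{ is not a special 2-path}] \;=\; \frac{3p+1}{(p+1)^2} \;\leq\; \frac{7}{9} \;<\; 1.
\]

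Next, I would exploit independence: as $w$ ranges over the $N-2$ vertices outside $\{u,v\}$, the events ``$uwv$ is not special'' depend on pairwise disjoint pairs of adjacencies and are therefore mutually independent; they are also independent of the event ``$\{u,v\}$ is non-adjacent'' (probability $1/(p+1)$). Hence
\[
\Pr[u,v \text{ non-adjacent and no special 2-path joins them}] \;\leq\; \frac{\alpha^{N-2}}{p+1}.
\]
A union bound over the $\binom{N}{2}$ pairs $\{u,v\}$ then gives
\[
\Pr[G \text{ is not an } (m,n)\text{-clique}] \;\leq\; \binom{N}{2}\cdot\frac{\alpha^{N-2}}{p+1} \;\longrightarrow\; 0,
\]
which completes the proof.

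The only substantive step is the probability computation above; the main point needing care is simply correctly enumerating the $p = 2m+n$ adjacency types at the middle vertex, so that the event ``same type at $w$'' has the right probability $p/(p+1)^2$. This is also where the excluded case $(m,n) = (0,1)$ naturally breaks the argument---there $p=1$ gives $\alpha = 1$, reflecting the obvious fact that the analogous statement is false for ordinary undirected graphs.
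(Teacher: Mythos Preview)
The paper does not include a proof of this theorem; it appears in the literature survey (Section~\ref{Chapter2_5}) as a cited result of Bensmail, Duffy and Sen, stated without argument. There is therefore nothing in the paper to compare your proof against.

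That said, your argument is correct. The uniform model you adopt is the natural reading of ``almost every'' (it coincides with the uniform distribution on labelled $(m,n)$-colored mixed graphs on $N$ vertices), the computation $\alpha = (3p+1)/(p+1)^{2} \leq 7/9$ for $p \geq 2$ is right, the independence across the $N-2$ choices of middle vertex $w$ is properly justified (the relevant pairs $\{u,w\},\{v,w\}$ are disjoint as $w$ varies, and also disjoint from $\{u,v\}$), and the union bound finishes cleanly. Your closing remark about the degenerate case $p=1$ is also apt. This is precisely the calculation that Corollary~1.2 invites, and it is presumably the argument given in the original reference as well.
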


The next result is the first result concerning $(m,n)$-relative clique number.

\begin{theorem}[Bensmail, Duffy and Sen 2017 \cite{duffy}]
For the family $\mathcal{O}$ of outerplanar graphs, we have $$ \omega_{a(m,n)}(\mathcal{O}) = \omega_{r(m,n)}(\mathcal{O})=3(2m+n)+1$$ where $\omega_{a(m,n)}(G)$ and $\omega_{r(m,n)}(G)$ denote the absolute and relative clique numbers of an $(m,n)$-colored mixed graph $G$.
\end{theorem}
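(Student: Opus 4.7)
The plan is to prove the equality by establishing matching bounds, namely $\omega_{a(m,n)}(\mathcal{O})\geq 3p+1$ via an explicit absolute clique, and $\omega_{r(m,n)}(\mathcal{O})\leq 3p+1$ via a structural argument, where I abbreviate $p=2m+n$; together with the inequality $\omega_{a(m,n)}\leq \omega_{r(m,n)}$ quoted earlier, this will force equality.

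For the lower bound, I would exhibit the fan graph $G$ on vertex set $\{v_0,v_1,\ldots,v_{3p}\}$ with $v_0$ adjacent to every $v_i$ and path edges $v_i v_{i+1}$ for $1\leq i\leq 3p-1$; this graph is plainly outerplanar. The plan is to partition the path vertices into $p$ consecutive blocks $B_t=\{v_{3t-2},v_{3t-1},v_{3t}\}$, give every edge $v_0 v_i$ with $v_i\in B_t$ the $t$-th adjacency type, and within each block assign the two path edges $v_{3t-2}v_{3t-1}$ and $v_{3t-1}v_{3t}$ two distinct types. Non-adjacent pairs from different blocks are then joined by the special $2$-path through $v_0$ (differing types at $v_0$), and the unique non-adjacent pair inside each block (at path-distance two) is joined by the special $2$-path through the middle block vertex. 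The corollary of Lemma~\ref{lem structure} then certifies $G$ as an $(m,n)$-clique, yielding $\omega_{a(m,n)}(\mathcal{O})\geq 3p+1$.

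For the upper bound I would fix an outerplanar $G$ and a relative clique $R\subseteq V(G)$, and without loss of generality take $G$ to be maximal outerplanar (since adding edges with arbitrary types only shrinks the set of homomorphisms out of $G$ and preserves $R$ as a relative clique). The central structural fact I would invoke is that in a maximal outerplanar graph the neighbors of any vertex $v$ induce a path $u_1 u_2\cdots u_k$, and, because outerplanar graphs contain no $K_{2,3}$ subgraph, any two $u_i,u_j$ with $|i-j|\geq 3$ share $v$ as their unique common neighbor. Fixing $v\in R$ and, for each type $t$, setting $S_t=\{u_i\in R\cap N(v):vu_i\text{ has type }t\}$, the plan is to argue that any two elements of $S_t$ agree in their type with $v$, so the $2$-path through $v$ fails to be special, forcing them to be adjacent or to share another common neighbor; the structural fact then bounds their path-index separation by $2$, so $|S_t|\leq 3$. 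Summing over the $p$ types yields $|R\cap N(v)|\leq 3p$.

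The hard part will be bounding $R\setminus N[v]$. Each $w\in R\setminus N[v]$ has to reach $v$ by a special $2$-path through some $u_i\in N(v)$, so $w$ sits in one of the sub-polygons carved out between consecutive chord-neighbors of $v$ on the outer cycle of $G$ and must be adjacent to at least one of $u_i,u_{i+1}$ under a prescribed type condition. The plan is to choose $v\in R$ to maximize $|R\cap N(v)|$ and to induct on $|V(G)|$ by peeling an ear of $G$ that is not the middle vertex of any special $2$-path within $R$; a careful case analysis, interleaving the type constraints inherited from the relative-clique condition with the rigidity of the outerplanar embedding (in particular, the impossibility of any vertex outside $N[v]$ serving as a common neighbor of two far-apart $u_i$'s) should push the total to $|R|\leq 3p+1$. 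This delicate interplay between neighborhood-path positions and adjacency types along each sub-polygon is the technical core of the upper bound.
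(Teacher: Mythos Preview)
The paper does not prove this theorem at all: it appears in the Literature Survey section as a result quoted from Bensmail, Duffy and Sen~\cite{duffy}, with no proof supplied. There is therefore no in-paper argument to compare your proposal against.

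As to your sketch on its own merits: the lower-bound construction with the fan on $3p+1$ vertices is correct (the types on the inter-block path edges $v_{3t}v_{3t+1}$ are irrelevant, since those endpoints are already adjacent), and the first half of your upper bound --- that for any $v\in R$ the set $R\cap N(v)$ splits into $p$ type-classes each of size at most $3$, via the no-$K_{2,3}$ property of outerplanar graphs --- is sound. The genuine gap is exactly where you flag it: the inequality $|R\cap N(v)|\leq 3p$ does \emph{not} by itself yield $|R|\leq 3p+1$, because there is no a priori reason for $R$ to lie inside a single closed neighbourhood. Your paragraph on $R\setminus N[v]$ gestures at an inductive ear-decomposition argument but supplies no mechanism that actually forces the extra relative-clique vertices back to zero; ``a careful case analysis \ldots\ should push the total'' is a hope, not a step. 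To complete the proof you would need either to show that a suitable choice of $v$ makes $R\subseteq N[v]$, or to run a genuine induction that tracks how removing an ear affects both $|R|$ and the ambient outerplanar structure --- neither of which is present in the proposal.
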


Using the fact that $\omega_{r(m,n)}(\mathcal{O})=3(2m+n)+1$ from the above result Bensmail, Duffy and Sen~\cite{duffy} proved the following interesting result regarding $(m,n)$-absolute clique number of planar graphs.

\begin{theorem}[Bensmail, Duffy and Sen 2017 \cite{duffy}]
For the family $\mathcal{P}$ of planar graphs, we have 
$$3(2m+n)^2+(2m+n)+1\leq \omega_{a(m,n)}(\mathcal{P}) \leq 9(2m+n)^2+2(2m+n)+2$$ for all $(m,n)\neq (0,1)$.
\end{theorem}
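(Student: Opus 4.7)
For the lower bound, I would exhibit an explicit planar $(m,n)$-colored mixed graph $G^*$ on $3p^2+p+1$ vertices (with $p=2m+n$) that is an $(m,n)$-clique. The tight outerplanar clique of order $3p+1$ from \cite{duffy} suggests a template: take an apex vertex $v_0$ and attach a family of outerplanar gadgets. The underlying graph remains planar since the join of a vertex with an outerplanar graph is planar. Heuristically, we use $p$ strata around $v_0$, each associated with a distinct adjacency type at $v_0$, to scale the outerplanar count $3p+1$ up to the quadratic count $3p^2+p+1$. The main work is combinatorial: arcs and edges must be colored so that every pair of non-adjacent vertices is joined by a special $2$-path, which amounts to a systematic check that at each potential middle vertex, the two endpoint edges have distinct adjacency types.

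For the upper bound, let $C$ be a planar $(m,n)$-clique and fix any $v\in V(C)$. By Lemma~\ref{lem structure}, every vertex of $C$ lies at graph-distance at most $2$ from $v$, so $V(C)=\{v\}\cup N(v)\cup N_2(v)$. I would bound $|N(v)|$ and $|N_2(v)|$ separately. For $|N(v)|$: since the apex $v$ is joined to all of $N(v)$ inside the planar $C$, the induced subgraph $C[N(v)]$ is outerplanar; after extracting an appropriate relative-clique structure on $N(v)$ using the special $2$-paths lying inside $C[\{v\}\cup N(v)]$, one applies the outerplanar bound $\omega_{r(m,n)}(\mathcal{O})=3p+1$ from \cite{duffy} to obtain $|N(v)|\le 3p+1$. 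For $|N_2(v)|$: for each $u\in N_2(v)$ fix a witness $w(u)\in N(v)$ with $v\,w(u)\,u$ a special $2$-path, partition $N(v)$ into $p$ classes by adjacency type of the edge to $v$ at $v$, and note that at each witness $w$ the edge $wu$ has one of only $p-1$ admissible adjacency types at $w$ (the type of $wv$ at $w$ is forbidden). A double-counting over the planar bipartite-like subgraph between $N(v)$ and $N_2(v)$, stratified by adjacency type and closed by Euler's formula, should yield $|N_2(v)|\le 9p^2+O(p)$; summing with the neighborhood bound and $v$ itself gives $9p^2+2p+2$.

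The principal obstacle is the distance-$2$ count. Naive witness-counting ignores planarity and gives no useful bound, while naive planarity counting ignores the adjacency-type constraints. A delicate combination is needed, pairing the $3p+1$ cap on $|N(v)|$ with the $(p-1)$-type cap on edges from each witness, and doing so tightly enough to recover both the leading constant $9$ and the linear correction $2p+2$. Boundary effects from the planar embedding (the "$-6$" term in Euler's formula and the handling of witnesses that witness only one distance-$2$ vertex) are where the $2p+2$ term is won.
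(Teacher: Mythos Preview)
This theorem is stated in the paper's literature survey (Section~\ref{Chapter2_5}) as a result of Bensmail, Duffy and Sen~\cite{duffy}, and the present paper gives no proof of it; there is therefore nothing here to compare your proposal against directly.

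On the merits of your sketch: the lower-bound plan (an explicit planar $(m,n)$-clique built by coning a vertex over outerplanar gadgets) is the natural approach and is in spirit what~\cite{duffy} does. Your upper-bound plan, however, has a real gap. You fix an \emph{arbitrary} $v\in V(C)$ and claim $|N(v)|\le 3p+1$ by applying the outerplanar bound to $C[\{v\}\cup N(v)]$. But the vertices of $N(v)$, while they do pairwise see one another in $C$, may do so through special $2$-paths whose internal vertex lies in $N_2(v)$; hence $N(v)$ is not in general a relative clique of the outerplanar graph $C[\{v\}\cup N(v)]$, and the bound $3p+1$ does not follow. Without a cap on $|N(v)|$ your decomposition $|V(C)|=1+|N(v)|+|N_2(v)|$ gives nothing, since planar diameter-$2$ graphs can have vertices of arbitrarily large degree. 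The argument in~\cite{duffy} (and, for the relative-clique parameter, the argument in Section~\ref{Chapter5} of this paper) proceeds differently: one first \emph{chooses} $v$ of bounded degree via an Euler-formula or Jendrol'--Voss type fact, and only then bounds the second neighbourhood by stratifying over adjacency types and invoking planarity. Your $N_2(v)$ discussion points in that direction, but the $|N(v)|$ step as written is unfounded and must be replaced by an explicit low-degree choice of $v$.
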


Moreover, Bensmail, Duffy and Sen~\cite{duffy} concluded their paper by implicitly suggesting that one may study 
the $(m,n)$-relative and absolute clique numbers using ideas from similar works done for 
$(m,n) = (1,0)$ or $(0,2)$. 
Their suggestions are a direct motivation of the works done in this article.

\section{Maximum Degree}\label{Chapter3}
We start by establishing an upper bound for $(m,n)$-relative clique number of the family $\mathcal{G}_\Delta$ of  graphs with maximum degree $\Delta$. The bound obtained is quadratic in 
$\Delta$.

\begin{theorem}\label{th rel delta}
 For the family $ \mathcal{G}_\Delta$ of  graphs with maximum degree $ \Delta\geq 0$, we have $$\displaystyle \omega_{r(m,n)}({\mathcal{G}_\Delta})\leq \lfloor\frac{p-1}{p}\Delta^2\rfloor+\Delta+1 $$
 where $p=2m+n > 1$.
 \end{theorem}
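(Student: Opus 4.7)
The plan is a double-counting argument over ordered pairs of distinct vertices of a maximum relative $(m,n)$-clique $R \subseteq V(G)$, with $G \in \mathcal{G}_\Delta$ and $p = 2m+n$. There are $|R|(|R|-1)$ such ordered pairs, and by Lemma~\ref{lem structure} each of them is either adjacent in $G$ or joined by a special 2-path $yuz$ for some midpoint $u \in V(G)$. I will bound each of these two contributions separately and then solve for $|R|$.

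First, the number of ordered adjacent pairs equals $2e(G[R]) \leq \sum_{w\in R} d(w) \leq |R|\Delta$, since every vertex has degree at most $\Delta$. For the non-adjacent pairs, fix a potential midpoint $u \in V(G)$ and partition $N(u)\cap R$ according to the adjacency type at $u$, producing $p$ classes of sizes $d_1^R(u),\dots,d_p^R(u)$ summing to $|N(u)\cap R|$. An ordered pair $(y,z)$ of distinct $R$-neighbors of $u$ gives a special 2-path $yuz$ precisely when $y$ and $z$ lie in different classes, so $u$ can cover at most $|N(u)\cap R|^2 - \sum_{i=1}^{p}(d_i^R(u))^2$ ordered non-adjacent pairs.

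The heart of the argument is the power-mean/Cauchy--Schwarz estimate $\sum_{i}(d_i^R(u))^2 \geq |N(u)\cap R|^2 / p$, which converts the per-midpoint bound into $\tfrac{p-1}{p}|N(u)\cap R|^2 \leq \tfrac{p-1}{p}\Delta \cdot |N(u)\cap R|$, using $|N(u)\cap R| \leq \Delta$. Summing over all midpoints $u \in V(G)$ and exchanging the order of summation,
$$\sum_{u\in V(G)} |N(u)\cap R| \;=\; \sum_{w\in R} d(w) \;\leq\; |R|\Delta,$$
so the total number of ordered non-adjacent pairs is at most $\tfrac{p-1}{p}\Delta^2 |R|$. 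Combining,
$$|R|(|R|-1) \;\leq\; |R|\Delta + \tfrac{p-1}{p}\Delta^2|R|,$$
and dividing by $|R|$ (we may assume $|R|\geq 1$) yields $|R| \leq \tfrac{p-1}{p}\Delta^2 + \Delta + 1$; integrality of $|R|$ then lets us apply the floor to the first term, giving the claimed bound.

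The main conceptual step is the convexity estimate at each midpoint, which is the source of both the $(p-1)/p$ savings and (together with $|N(u)\cap R|\leq \Delta$) one of the two factors of $\Delta$ in the leading term. The only bookkeeping point requiring care is that the midpoints range over all of $V(G)$ rather than over the relative clique itself; this is handled cleanly by the global double-count identity $\sum_u |N(u)\cap R| = \sum_{w\in R} d(w)$, after which the maximum-degree hypothesis absorbs everything on the right-hand side.
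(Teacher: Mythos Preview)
Your argument is correct. The core computation---bounding the special 2-paths through each midpoint $u$ by $\tfrac{p-1}{p}|N(u)\cap R|^2$ via Cauchy--Schwarz, replacing one factor of $|N(u)\cap R|$ by $\Delta$, and then collapsing $\sum_{u}|N(u)\cap R|$ to $\sum_{w\in R}d(w)\le |R|\Delta$---is exactly the engine driving the paper's proof as well. The packaging differs, however. The paper first proves an intermediate lemma (Lemma~\ref{th rel polta}) that $G^{2}$ is $\bigl(\lfloor\tfrac{p-1}{p}\Delta^{2}\rfloor+\Delta\bigr)$-degenerate for \emph{every} induced subgraph, phrased as a discharging argument (initial charge $\Delta^{2}/p$ on each vertex of the chosen set, discharging $\Delta/p$ to neighbours), and then deduces the theorem since a relative clique induces a clique in $G^{2}$. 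Your proof bypasses both the degeneracy statement and the discharging formalism, going straight to the double count on ordered pairs of $R$; this is more elementary and slightly shorter, while the paper's route yields the stronger intermediate fact that $G^{2}$ has bounded degeneracy, which could be of independent use (for example it also bounds $\chi(G^{2})$, not just $\omega(G^{2})$).
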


We shall need the following definitions and a supporting lemma to prove Theorem~\ref{th rel delta}.

An undirected simple graph $G$ is \textit{$k$-degenerate} if each subgraph of $G$ contains a vertex with degree at most $k$. 

In an $(m,n)$-colored mixed graph, a vertex $a$ is said to \emph{see} a vertex $b$ if they are either adjacent or connected by a special 2-path. If $a$ and $b$ are connected by a special 2-path with $w$ acting as the internal vertex, then it is said that $u~sees~v~through~w$ or equivalently $v~sees~u~through~w$.

If $G$ is an $(m,n)$-colored mixed graph, then $G^2$ is defined as the graph with set of vertices $V(G^2)=V(G)$ and set of edges $E(G^2)=\{uv|u \text{ sees } v \text{ in } G\}$. Note that $G^2$ is an undirected graph.

 \begin{lemma}\label{th rel polta}
 Let $G$ be an $(m,n)$-colored mixed graph with maximum degree $ \Delta$. Then its special 2-path graph $ G^2$=($V,E'$) is $ \lfloor{\frac{(2m+n-1)\Delta^2}{2m+n}}\rfloor+\Delta$-degenerate for all $(m,n) \neq (0,1)$.
 \end{lemma}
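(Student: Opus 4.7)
The plan is to verify $k$-degeneracy via its defining property: every induced subgraph of $G^2$ has a vertex of degree at most $k$. So I fix an arbitrary $S \subseteq V(G)$ and bound $|E(G^2[S])|$ by splitting its edges into two kinds: (i) those coming from an edge/arc of $G$ joining two vertices of $S$, and (ii) those coming from a special $2$-path $u w v$ in $G$ with $u,v \in S$, where the midpoint $w$ may lie \emph{anywhere} in $V(G)$---not necessarily in $S$. The direct contribution (i) is at most $|S|\Delta/2$; the real work is to bound (ii).

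For (ii), I fix a midpoint $w \in V(G)$ and let $d_w^S = |N_G(w) \cap S|$. Each edge at $w$ carries one of $p = 2m+n$ ``types at $w$''---the $n$ edge colors together with $m$ outgoing and $m$ incoming arc colors---and by the definition recalled just before Lemma~\ref{lem structure}, a $2$-path $uwv$ is non-special iff its two legs $wu$ and $wv$ share a type at $w$. Writing the type-class sizes among the edges from $w$ into $S$ as $c_1, \ldots, c_p$ with $\sum_i c_i = d_w^S$, the number of \emph{ordered} special $2$-paths with midpoint $w$ and both endpoints in $S$ equals $d_w^S(d_w^S-1) - \sum_i c_i(c_i-1) = (d_w^S)^2 - \sum_i c_i^2$. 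Cauchy--Schwarz gives $\sum_i c_i^2 \ge (d_w^S)^2/p$, so there are at most $\tfrac{(p-1)(d_w^S)^2}{2p}$ unordered special $2$-paths through $w$ with endpoints in $S$.

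Summing over $w \in V(G)$ and using both $d_w^S \le \Delta$ and the double-counting identity $\sum_{w \in V(G)} d_w^S = \sum_{s \in S} d_G(s) \le |S|\Delta$, I obtain $\sum_w (d_w^S)^2 \le \Delta \cdot \sum_w d_w^S \le |S|\Delta^2$, and so the contribution of (ii) is at most $\tfrac{(p-1)|S|\Delta^2}{2p}$. Adding both pieces yields $|E(G^2[S])| \le \tfrac{|S|\Delta}{2} + \tfrac{(p-1)|S|\Delta^2}{2p}$, so the average---and hence the minimum---degree in $G^2[S]$ is at most $\Delta + \tfrac{(p-1)\Delta^2}{p}$; since degrees are integers, $G^2[S]$ has a vertex of degree at most $\lfloor (p-1)\Delta^2/p\rfloor + \Delta$, which is the required degeneracy bound. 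The only genuinely delicate point is that a special $2$-path between two vertices of $S$ may have its midpoint outside $S$, so one cannot just apply the bound to $(G[S])^2$; extending the midpoint sum to all of $V(G)$ and then converting it back via $\sum_w d_w^S = \sum_{s \in S} d_G(s)$ is what keeps the bookkeeping in terms of $|S|$ rather than $|V(G)|$ and delivers the exact advertised constant.
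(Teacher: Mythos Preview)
Your proof is correct and follows essentially the same approach as the paper: bound, for each potential midpoint $w\in V(G)$, the number of special $2$-paths through $w$ with both endpoints in $S$ by $\tfrac{(p-1)(d_w^S)^2}{2p}$, then convert the sum over all $w$ back to a bound in terms of $|S|$ via $d_w^S\le\Delta$ and $\sum_w d_w^S\le |S|\Delta$. The paper packages this last step as a discharging argument (each $s\in S$ sends charge $\Delta/p$ to each neighbor, and total charge is conserved), whereas you carry it out directly with Cauchy--Schwarz and double counting; the underlying inequalities are identical.
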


 \begin{proof}
 We use  discharging method to show that for any set $S$ $\subseteq V$, the induced graph $ {G}^2$[$S$] has minimum degree at most $$ \lfloor\frac{(2m+n-1)\Delta^2}{2m+n}\rfloor+\Delta .$$
 Assume that the initial charge $ \gamma(v)$ of the vertices to be
$$
\gamma(v)=
\begin{cases}
\frac{\Delta^2}{p},~ \text{if}~  v\in V(S)\\
~0~, ~ \text{if}~  v\notin V(S)
\end{cases}
$$
where $ p=2m+n$.
So the total charge of the graph is $ \frac{\Delta^2|S|}{p}$. Now we proceed to the discharging step where every vertex of $S$ gives the charge $ \frac{\Delta}{p}$ to each of its neighbouring vertices.

Let $ \gamma^*(v)$ be the new updated charge of the vertices of $G$. So a vertex $v$ with $k$ neighbours in $S$ has charge $ \gamma^*(v)\geq\frac{k\Delta}{p}$.
Now let us denote by $ \pi_s(v)$, the number of special 2-paths going through a vertex $v$ and linking two vertices in $S$. So for a vertex $v$ with $k$ neighbours in $S$, we have that 

\begin{align*}
\displaystyle \pi_s(v) \leq   & max\left\lbrace\sum_{1\leq j<l \leq 2m+n}i_j \times i_l \text{ subject to } \sum_{j=1}^{2m+n}i_j=k\right\rbrace\\
& \leq\lfloor{{2m+n}\choose{2}}\left(\frac{k}{2m+n}\right)^2\rfloor \\
& =\lfloor{{p}\choose{2}}\left(\frac{k}{p}\right)^2\rfloor=\lfloor\frac{(p-1)k^2}{2p}\rfloor
\end{align*}
 where $ i_{2j-1}$ and $ i_{2j}$ denote the number of incoming and outgoing arcs of color $ j,1\leq j\leq m$ and $ i_{j+2m}$ denotes the number of edges of color $ j,1\leq j\leq n$.
Since $k \leq \Delta$, we have that 
$$ \pi_s(v)\leq \lfloor\frac{(p-1)k^2}{2p}\rfloor\leq \frac{(p-1)k\Delta}{2p}\leq \frac{(p-1)\gamma^*(v)}{2}.$$

\bigskip

\noindent So the number of edges in $ {G}^2[S]$ is
$$\displaystyle E[ {G}^2[S]]\leq \frac{p-1}{2}\sum_{v\in V(G)}\gamma^*(v)+\frac{\Delta}{2}|S|=\frac{p-1}{2}\sum_{v\in V(G)}\gamma(v)+\frac{\Delta}{2}|S|$$ as the discharging does not change the total charge of the graph.

 \bigskip

\noindent  So $$E[\displaystyle {G}^2[S]]\leq \frac{p-1}{2}\sum_{v\in V(G)}\gamma(v)+\frac{\Delta}{2}|S|\leq \frac{(p-1)\Delta^2}{2p}|S|+\frac{\Delta}{2}|S|.$$ So we have that the sum of the degrees in 
 $ {G}^2[S]$ is at most $$\displaystyle \frac{(p-1)\Delta^2}{p}|S|+\Delta|S|$$ which implies that there is a vertex with degree at most $$\displaystyle \lfloor\frac{p-1}{p}\Delta^2\rfloor+\Delta $$ in $ {G}^2[S]$.
 \end{proof}
 
\bigskip 

The above result is a generalization of a result due to Gon{\c{c}}alves, Raspaud and Shalu~\cite{shalu}(see page 43, Lemma 4). 
 Now we are ready to prove the main theorem of this chapter.
 
  \bigskip

 \noindent \textit{Proof of Theorem~\ref{th rel delta}.}
 Let $\displaystyle {G}$ be an $(m,n)$-colored mixed graph with maximum degree $ \Delta$. Then $\displaystyle {G}^2$ is $\displaystyle (\lfloor\frac{p-1}{p}\Delta^2\rfloor+\Delta) $-degenerated following Lemma~\ref{th rel polta}. Let $R$ be any mixed relative clique of $ {G} $. So, $\displaystyle {G}^2[R] $, the induced graph of $ {G}^2$ by $R$, is a clique. Hence, $ {G}^2[R] $ has a vertex of degree at most $\displaystyle (\lfloor\frac{p-1}{p}\Delta^2\rfloor+\Delta) $ as $\displaystyle {G}^2 $ is $ (\lfloor\frac{p-1}{p}\Delta^2\rfloor+\Delta) $-degenerated.
 So $$|R|=| {G}^2[R]|\leq \lfloor\frac{p-1}{p}\Delta^2\rfloor+\Delta+1.$$ 
 Since we have chosen $R$ to be any mixed relative clique of $ {G}$, hence $$ \omega_{r(m,n)}({\mathcal{G}_\Delta})\leq \lfloor\frac{p-1}{p}\Delta^2\rfloor+\Delta+1. $$ 
 This completes the proof. \hfill $ \square $
 
 \medskip

 This bound is significant as the $(m,n)$-colored mixed chromatic number of $\mathcal{G}_\Delta(m,n)$ 
is known to be exponential in $\Delta$~\cite{nandi}.

\section{Subcubic Graphs}\label{Chapter4}
In this chapter, we provide the exact  
$(m,n)$-relative clique number of the family 
$\mathcal{G}_3$ of  graphs with maximum degree 3. 

We need to define a few notions used in the proof 
before stating our result. For convenience and 
self-containment of this chapter we recall a few definitions previously introduced in Chapter~\ref{Chapter3}.

\medskip

In an $(m,n)$-colored mixed graph, a vertex $a$ is said to \emph{see} a vertex $b$ if they are either adjacent or connected by a special 2-path. If $a$ and $b$ are connected by a special 2-path with $w$ acting as the internal vertex, then it is said that $u~sees~v~through~w$ or equivalently $v~sees~u~through~w$.

\medskip

We define a \emph{partial order} $\prec $ for $(m,n)$-colored mixed graphs. We define $G_1\prec G_2$ if either of the following holds:
\begin{itemize}
\item $|V(G_1)|<|V(G_2)|$,

\item $|V(G_1)|=|V(G_2)|~and~|A(G_1)|+|E(G_1)|<|A(G_2)|+|E(G_2)|$.
\end{itemize}

\medskip

Given an $(m,n)$-colored mixed graph $G$, we shall denote by $R$, a relative clique of cardinality $\omega_{r(m,n)}(G)$.
Furthermore, we denote by $S$, the set $V(G)\setminus R$. 
For convenience we call the vertices of
 $R$  $good~vertices$ and the vertices of $S$  $helper~vertices$, respectively.
 
   \vspace*{\baselineskip}
  \begin{center}
  \begin{figure}
  \begin{center}
  \includegraphics[width=\linewidth]{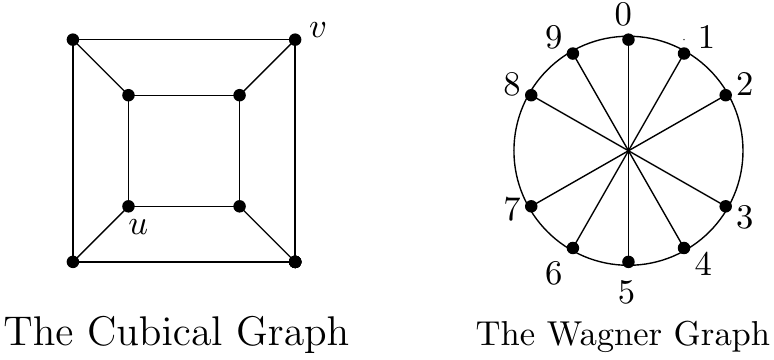}
  \caption{The cubical and Wagner Graphs}\label{fig cub-wagner}
  \end{center}
  \end{figure}
  \end{center}

  \begin{theorem}
  For the family $\mathcal{G}_3$ of $(m,n)$-colored mixed graphs with maximum degree 3, we have 
  
  \begin{itemize}
  \item[$(a)$] $ \omega_{r(1,0)}(\mathcal{G}_3)=7$~\cite{sir},
  
  \item[$(b)$] $ \omega_{r(0,2)}(\mathcal{G}_3)=7$,
  
  \item[$(c)$] $ \omega_{r(0,3)}(\mathcal{G}_3)=8$,
  
  \item[$(d)$] $ \omega_{r(0,n)}(\mathcal{G}_3)=10$ whenever $n \geq 4$,
  
  \item[$(e)$] $ \omega_{r(m,0)}(\mathcal{G}_3)=10$ whenever $m \geq 2$,
  
  \item[$(f)$] $ \omega_{r(m,n)}(\mathcal{G}_3)=10$ whenever $m,n\geq 1$.
  \end{itemize}
  \end{theorem}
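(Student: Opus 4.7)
The plan is to handle each case by matching lower and upper bounds. For the lower bounds I would exhibit explicit $(m,n)$-colored mixed graphs whose underlying simple graph is one of a small collection of subcubic graphs, notably the cubical graph $Q_3$ and the Wagner graph $V_8$ depicted in Figure~\ref{fig cub-wagner}, together with a cubic graph on $10$ vertices (such as the Petersen graph) for cases (d)--(f). In each instance I would specify adjacency types edge-by-edge and then apply Lemma~\ref{lem structure} to check that every pair of intended relative-clique vertices is either adjacent or joined by a special 2-path. The constructions for (b) and (c) require subtle choices of colors so that $7$ or $8$ vertices, respectively, form a relative clique under only $p=2$ or $p=3$ available types, while for (d)--(f) the larger palette $p=2m+n\ge 4$ comfortably accommodates the required distinctness of types at each helper.

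For the upper bounds the common engine is the closed $2$-neighborhood observation: by Lemma~\ref{lem structure}, every vertex of a relative clique $R$ must lie at distance at most $2$ in the underlying graph from every other vertex of $R$, and in any subcubic graph a closed $2$-ball contains at most $1+3+3\cdot 2=10$ vertices. This immediately yields $|R|\le 10$ for cases (d), (e), (f).

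The sharper bounds $7$ and $8$ in cases (b) and (c) demand a finer analysis, for which I would argue by minimum counterexample under the partial order $\prec$ introduced above, writing $R$ for the putative large relative clique and $S$ for the helper set. For case (c) with $p=3$, realizing $|R|\ge 9$ in the closed $2$-ball would force each degree-$3$ vertex $w$ serving three pairs of $R$ to carry three pairwise-distinct adjacency types on its incident edges; propagating this rigid local rule across helpers shared by different pairs of good vertices, or around short cycles, produces an inconsistent type assignment. For case (b) with $p=2$, pigeonhole at any degree-$3$ vertex forces two incident edges of the same type, so such a vertex serves at most $2$ of the $\binom{3}{2}=3$ possible special $2$-paths between its neighbors; a double count of the pairs of $R$ against the direct edges in $G[R]$ and the special $2$-paths through each helper, combined with the bound on the number of edges in a subcubic graph on $|R\cup S|$ vertices, then collapses to $|R|\le 7$.

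The main obstacle is the structural case analysis in (b) and (c). Even after deriving a counting inequality close to the target, equality must be ruled out by exhibiting a concrete clash between the forced labelings, which requires examining several subcases according to the degrees of vertices of $S$ that serve as internal vertices of special $2$-paths, the number of neighbors each helper has inside $R$, and the way direct edges among vertices of $R$ interact with special $2$-paths through shared helpers. The minimality of the counterexample under $\prec$ is the main lever: it forces every vertex of $S$ to be ``useful'' (serving at least one special $2$-path of $R$) and every edge/arc to be ``necessary'', sufficiently constraining the local picture for the combinatorial contradictions to go through.
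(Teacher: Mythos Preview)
Your overall framework---minimal counterexample under $\prec$, double-counting edges $e$ and special $2$-paths $t$ against $\binom{r}{2}$---matches the paper, and your closed $2$-ball argument for parts (d)--(f) is actually a cleaner route than the paper's, which reruns the $e+t$ count from part (c) to get the same $\le 10$.

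However, there is a genuine gap in your endgame for (b) and (c). In (b) you assert that the double count ``collapses to $|R|\le 7$''. It does not: with $p=2$ one has $e\le\tfrac{3}{2}(r-s)$ and $t\le 2r-s$, hence $\binom{r}{2}\le e+t$ yields only $r\le 8$ when $s=0$. The paper closes this gap by observing that equality throughout forces $G$ to be a triangle-free cubic graph on $8$ vertices, leaving exactly two candidates (the cubical graph $Q_3$ and the Wagner graph $V_8$), and then eliminating each by hand: $Q_3$ has a non-adjacent pair with no common neighbour at all, while in $V_8$ a short propagation of colors along a chord forces two vertices at distance $3$ to lose their only special $2$-path.

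In (c) the same issue arises: the count gives $r\le 10$, and both $r=10$ and $r=9$ must be excluded separately. Your ``propagating this rigid local rule \ldots\ produces an inconsistent type assignment'' is the right instinct but not the actual mechanism the paper uses. The paper first argues that the tight cases force the underlying graph to be cubic of diameter $2$ on $10$ vertices, hence the Petersen graph; then, since any two non-adjacent vertices of Petersen are joined by a \emph{unique} $2$-path, every $2$-path must be special, which is equivalent to a proper $3$-edge-coloring of Petersen---known to be impossible. You should replace the vague propagation language with this reduction to Petersen and the $3$-edge-colorability obstruction; without it, neither $r=9$ nor $r=10$ is ruled out.
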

  \begin{proof}
  \begin{itemize}
  
  \item[$(a)$] 	 Das,  Mj and  Sen proved this 
  part~\cite{sir}.
  
  \medskip
  
  \item[$(b)$]	  Let $ {G}$ be a minimal (with respect to $ \prec$) subcubic $(0,2)$-colored mixed graph with $\displaystyle \omega_{r(0,2)}({G})>7$ and corresponding relative clique $\displaystyle R$ where $\displaystyle R$ denotes the set of good vertices. Let $S=V(G)\setminus R$ denote the set of helper vertices. Assume that $\displaystyle |R|=r,~|S|=s,~e$ is the number of edges with both endpoints in $\displaystyle R$ and $\displaystyle t$ is the number of special 2-paths with both endpoints in $\displaystyle R$.
  
  First we shall prove that $ e\leq \frac{3}{2}(r-s)$. 
  Observe that $\displaystyle S$ is an independent set and there is no vertex of degree 1 in $\displaystyle S$ 
   as $ {G}$ is minimal. Also there is no vertex of degree 2 in $\displaystyle S$, as otherwise, we can remove that vertex and make its neighbors adjacent to obtain a graph that contradicts the fact that $ {G}$ is minimal. So, each vertex of $\displaystyle S$ is adjacent to exactly 3 good vertices. So there are $\displaystyle 3s $ edges each with exactly one endpoint in $\displaystyle S$. As $ {G}$ is subcubic
    $\displaystyle |E({G})|\leq \frac{3}{2}(r+s)$ and hence $\displaystyle e\leq |E({G})|-3s=\frac{3}{2}(r-s)$.
  
  Now we shall show that $t\leq 2r-s$. Since $ {G}$ is a subcubic graph,  each vertex of $ {G}$ can be an internal vertex of at most 2 special 2-paths. So the total number of special 2-paths in $ {G}$ is at most $ 2(r+s)$. Now 
   each vertex $ v\in S$ is adjacent to exactly three good vertices, say, $ v_1,v_2~and~v_3$. Now for each $ i\in \lbrace1,2,3\rbrace$, we have counted 2 special 2-paths through $v_i$ to get the upper bound $2(r+s)$ of the total number of special 2-paths in ${G}$. Note that one of those special 2-paths must have $v$ as an endpoint. So for each edge incident to a vertex $v\in S$, there is a distinct special 2-path in ${G}$ with one of its endpoints in $S$. Hence $ t\leq 2(r+s)-3s=2r-s$.
  
  Now as $\displaystyle R$ is a relative clique, the induced graph ${G}^2[R]$ is a clique. Now each edge of ${G}^2[R]$ either corresponds to an edge with both endpoints in $R$ or to a special 2-path with both endpoints in $R$. Hence
    $$ \frac{r(r-1)}{2}\leq e+t\leq \frac{7r-s}{2} \Rightarrow r\leq 8-\frac{s}{r}~~\cdots (\ast)$$
  
  If either $s>0$ or $e< \frac{3}{2}(r-s)$, then by relation $(\ast)$, $r\leq7$. If $s=0,r=8~ and~e=\frac{3r}{2}$. So ${G}$ is a cubic $(0,2)$-absolute clique.
  Also if $ \frac{r(r-1)}{2} < e+t$, then $r\leq 7$. So $ \frac{r(r-1)}{2}= e+t$, that is,  the endpoints of the special 2-paths are non-adjacent. So the graph ${G}$ is triangle-free.

  It is known that there are exactly two non-isomorphic triangle-free cubic graphs on 8 vertices namely the cubical graph and the Wagner graph (see Fig.~\ref{fig cub-wagner}).

  In the cubical graph, the vertices $u$ and $v$ are neither adjacent nor connected by a 2-path, so it can not 
  be the underlying undirected graph of a $(0,2)$-absolute clique. 
  
  In the Wagner Graph suppose that 04 be an edge of color 1. Then for 0 to see 3 and 5, the edges 43 and 45 must be of color 2. Now 3 and 5 are neither adjacent nor connected by a 2-path. So ${G}$ is not an underlying undirected graph of a $(0,2)$-absolute clique, a contradiction.
  Thus $r\leq 7$.

  \begin{figure}
  $$\includegraphics[scale=2]{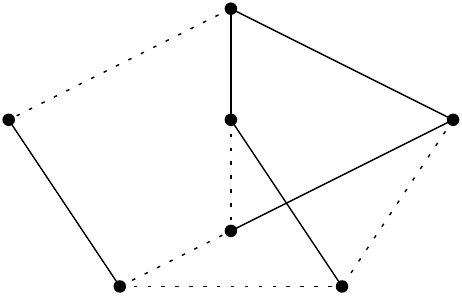}$$
  \caption{Example of a relative clique for $(m,n)=(0,2)$}\label{fig 02clique}
  \end{figure}

  Figure.~\ref{fig 02clique} shows that there exists a
 $(0,2)$-absolute clique  
of order 7. Hence $\omega_{r(0,2)}(\mathcal{G}_3)=7$.
  
   \medskip
  
  \item[$(c)$]	  Let ${G}$ be a minimal (with respect to $ \prec$) subcubic $(0,3)$-colored mixed graph with $\displaystyle \omega_{r(0,3)}(\mathcal{G}_3) > 8$ and corresponding relative clique $\displaystyle R$, where $\displaystyle R$ denotes the set of good vertices. Let $S=V(G)\setminus R$ denote the set of helper vertices. Assume that $\displaystyle |R|=r,~|S|=s,~e$ is the number of edges with both endpoints in $\displaystyle R$ and $\displaystyle t$ is the number of special 2-paths with both endpoints in $\displaystyle R$.
  
  We now prove that $ e\leq \frac{3}{2}(r-s)$. Clearly, $\displaystyle S$ is an independent set and there is no vertex of degree 1 in $\displaystyle S$  as $ {G}$ is minimal. Also there is no vertex of degree 2 in $\displaystyle S$ as otherwise, we can remove that vertex and make its neighbours adjacent to obtain a graph that contradicts the fact that $ {G}$ is minimal. So, each vertex of $\displaystyle S$ is adjacent to exactly 3 good vertices. So there are $\displaystyle 3s $ edges each with exactly one endpoint in $\displaystyle S$. As $ {G}$ is subcubic,  $\displaystyle |E({G})|\leq \frac{3}{2}(r+s)$ and hence $\displaystyle e\leq |E({G})|-3s=\frac{3}{2}(r-s)$.
  
  Now we shall show that $t\leq 3(r-s)$. Since $ {G}$ is a subcubic graph, so each vertex of ${G}$ can be an internal vertex of at most 3 special 2-paths. So the total number of special 2-paths in $ {G}$ is at most $ 3(r+s)$. Now each each vertex $ v\in S$ is adjacent to exactly 3 good vertices, say, $ v_1,v_2~\text{ and }~v_3$. Now for each $ i\in \lbrace1,2,3\rbrace$, we have counted 3 special 2-paths through $v_i$ to get the upper bound $3(r+s)$ of the total number of special 2-paths in ${G}$. Clearly one of those special 2-paths must have $v$ as an endpoint. So for each edge incident to a vertex 
  $v\in S$, there is a distinct special 2-path in ${G}$ with one of its endpoints in $S$. Hence $ t\leq 3(r+s)-6s=3(r-s)$.
  
  Now as $\displaystyle R$ is a relative clique, the induced graph ${G}^2[R]$ is a clique. Now each edge of ${G}^2[R]$ either corresponds to an edge with both endpoints in $R$ or to a special 2-path with both endpoints in $R$. Hence, $$ \frac{r(r-1)}{2}\leq e+t\leq \frac{3}{2}(r-s)+3(r-s) \Rightarrow r\leq 10-9\frac{s}{r}~~...(\ast)$$
  
 If either $s>0$ or $e< \frac{3}{2}(r-s)$, then by relation $(\ast)$, $r\leq 9$. If $s=0,r=10~ \text{ and }~e=\frac{3r}{2}$. So ${G}$ is a cubic $(0,3)$-absolute clique.  
 Also if $ \frac{r(r-1)}{2}< e+t$, then $r\leq 9$. 
 So $ \frac{r(r-1)}{2}= e+t$ if $r=10$, that is,  the endpoints of the special 2-paths are non-adjacent. So the graph ${G}$ is triangle-free.
 
 It is known that there is exactly one non-isomorphic triangle-free cubic graph with diameter 2 on 10 vertices namely the Petersen graph. In the Petersen graph there exists a unique 2-path between any two vertices. Thus the Petersen graph 
 is underlying graph of a $(0,3)$-absolute clique if and only if the Petersen graph is 3-edge-colorable. However, we know that the Petersen graph is not 3-edge-colorable~\cite{reza}. 
 Thus $\displaystyle \omega_{r(0,3)}(\mathcal{G}_3) \leq 9$.
 
So $|R|=r\leq 9$. Putting $r=9$ in relation $\ast$, we have $s\leq 1$. So there exists at most 1 helper vertex.

So assume that there exists $G$ such that $\displaystyle \omega_{r(0,3)}(G)=9$ and let $ R$ be the relative clique of order 9. 

\begin{figure}
 $$\includegraphics[scale=0.5]{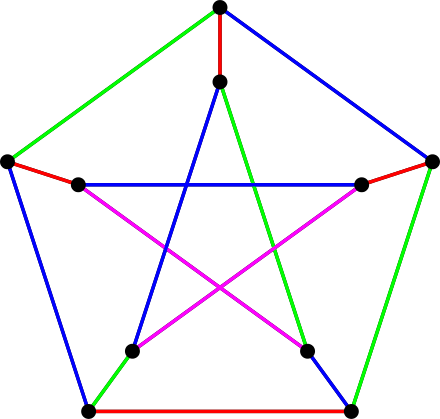}$$
 \caption{Example of a relative clique for $(m,n)=(0,4)$}\label{fig 04clique}
 \end{figure}

However, a graph on 9 vertices can not be $3$-regular due to the Handshaking lemma. Thus $H$ must have exactly 10 vertices.

A good vertex of $G$ must see the other 8 good vertices directly or through a 2-path.
Observe that  there can not be a good vertex of degree 1 or 2, as otherwise, that vertex will not be able to see all the other good vertices of $G$. So all the good vertices 
 must be of degree 3. Moreover, we have already noticed that a helper vertex must have degree 3 due to the minimality of $G$. Thus, $G$ is a diameter 2 cubic graph on 10 vertices
 with exactly 9 good vertices and 1 helper. We know that the Petersen graph is the only graph which satisfies this property. 
 
 As we can always 
 color the incident edges of a helper vertex with 3 different colors without decreasing the $(0,3)$-relative clique number of $G$, we can again conclude that 
  the Petersen graph 
 is underlying graph of a $(0,3)$-colored mixed graph with $(0,3)$-relative clique of cardinality 9 if and only if the Petersen graph is 3-edge-colorable. However, we know that the Petersen graph is not 3-edge-colorable~\cite{reza}. 
Thus $\displaystyle \omega_{r(0,3)}(G)\leq8$.

 We now show that 
 $\displaystyle \omega_{r(0,3)}(\mathcal{G}_3) \geq 8$. 
 Consider the Wagner Graph. We mark the chords of the circle as type 1 edge and the eight subdivisions of the circumference as types 2 and 3 respectively. The Wagner Graph now becomes a relative clique. 
 
 So
  $\displaystyle \omega_{r(0,3)}(\mathcal{G}_3))=8$.
 
\medskip

 \item[$(d)$]	Proceeding as in part (c), it can be shown that $ \omega_{r(0,n)}(\mathcal{G}_3)\leq 10$ for all $n \geq 4$. 
  Fig.~\ref{fig 04clique} shows that the Petersen Graph is a $(0,n)$-absolute clique for all $n \geq 4$. 
   So $ \omega_{r(0,n)}(\mathcal{G}_3)=10$ whenever $n\geq 4$.

\begin{figure}
 $$\includegraphics[scale=.4]{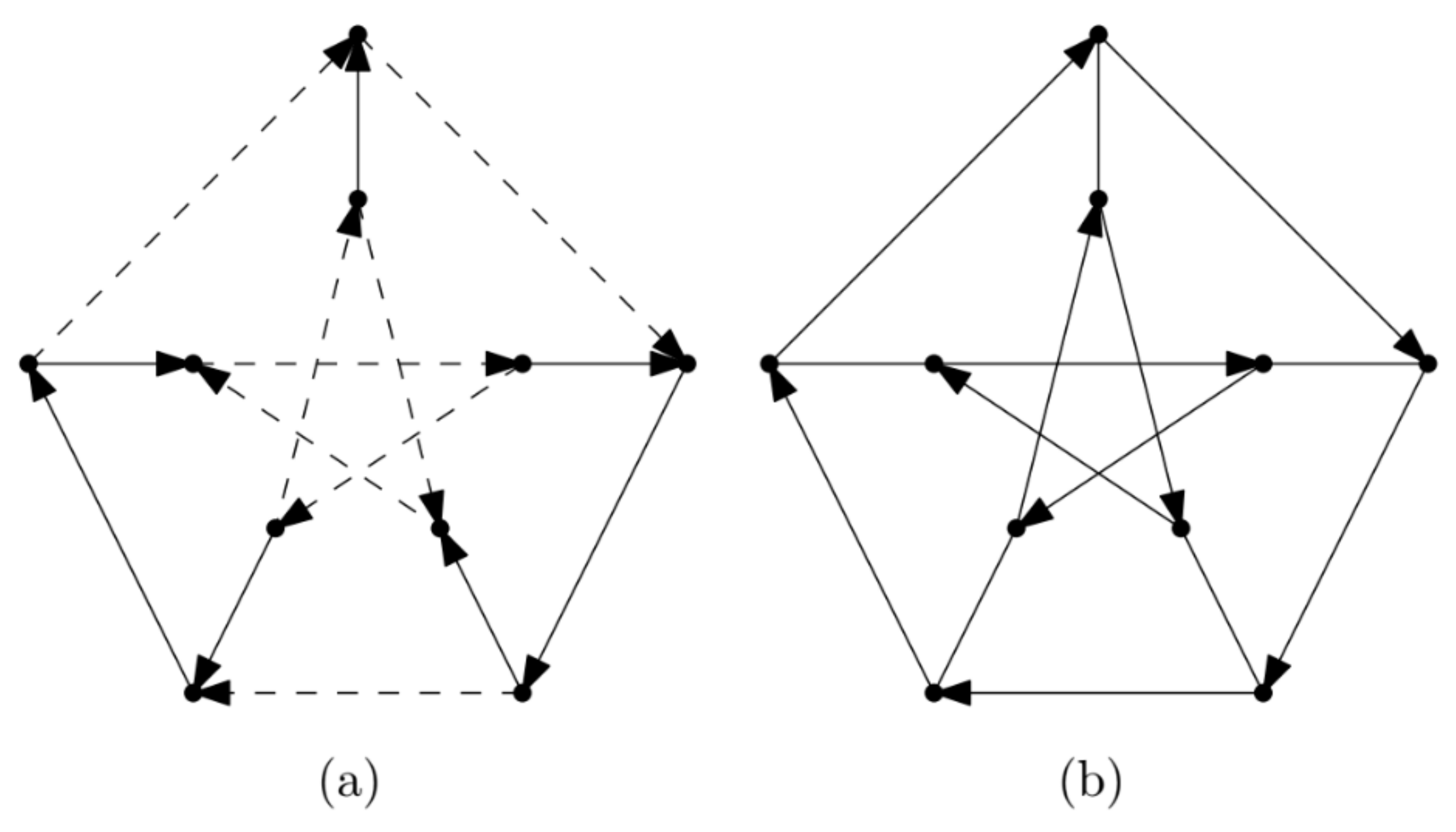}$$
 \caption{(a) Example of a relative clique for $(m,n)=(2,0)$ (b) Example of a relative clique for $(m,n)=(1,1)$}\label{fig 20clique}
 \end{figure}

\medskip 
 
 \item[$(e)$] Proceeding as in part (c), it can be shown that $ \omega_{r(m,0)}(\mathcal{G}_3)\leq 10$ for all $m \geq 2$. 
 Fig.~\ref{fig 20clique} shows that $ \omega_{r(2,0)}(\mathcal{G}_3) \geq 10$.
 Hence $ \omega_{r(m,0)}(\mathcal{G}_3)=10$ whenever $m\geq 2$.

\medskip 
 
 \item[$(f)$]	Proceeding as in part (e), we see that $ \omega_{r(m,n)}(\mathcal{G}_3)\leq 10$ for all 
 $m,n \geq 1$. 
 Fig.~\ref{fig 20clique} shows that $ \omega_{r(1,1)}(\mathcal{G}_3)\geq 10$.
Hence, $ \omega_{r(m,n)}(\mathcal{G}_3)=10$ whenever $m,n\geq 1$. 
 \end{itemize}
\end{proof}

The above result is a generalization of a result due to Das, Mj ans Sen~\cite{sir} where they studied only the case $(m,n)=(1,0)$.

\section{Planar Graphs}\label{Chapter5}
A \textit{planar graph} is a graph that can be embedded in the plane, that is, it can be drawn on the plane in such a way that its edges intersect only at their endpoints. In other words, it can be drawn in such a way that no edges cross each other.
  
  We give an upper bound for the $(m,n)$-relative clique number of the family $ \mathcal{P}$ of  planar graphs.
  
We need to define a few notions used in the proof 
before stating our result. For convenience and 
self-containment of this chapter we recall a few definitions previously introduced in Chapter~\ref{Chapter3} 
and~\ref{Chapter4}.

\medskip

The set of all vertices adjacent to a vertex $v$  is denoted by $N(v)$. The set of vertices adjacent to a vertex $v$ by an edge of color $i$ in an $(m,n)$-colored mixed graph $G$ 
is denoted by $N_i(v)$ for all $i\in \lbrace1,2,3,\cdots,n\rbrace $; the set of vertices $u$ adjacent to $v$ with an arc $vu$ of color $j$ is denoted by $N_j^{+}(v)$ and the set of vertices $u$ adjacent to $v$ with an arc $uv$ of color $j$ is denoted by $N_j^{-}(v)$ for all $j\in \lbrace 1,2,3,\cdots,m \rbrace $.
So $\displaystyle N(v)=\lbrace \bigcup^{n}_{i=1} N_i(v)\rbrace \bigcup \lbrace \bigcup^{m}_{j=1} N_j^{+}(v) \rbrace \bigcup \lbrace \bigcup^{m}_{j=1} N_j^{-}(v) \rbrace $. The $closed~neighborhood$ of $v$ is defined as 
$N[v]=N(v)\cup \{v\}$. Let $d(v)=|N(v)|$ denote the $degree$ of $v$.

 \medskip
 
Two vertices $a$ and $b$ of an $(m,n)$-colored mixed graph are said to $agree$ on another vertex $c$ either if $ac$ and $bc$ are edges of the same color or if $c\in N_i^\gamma (a) \cap N_i^\gamma (b)$ for some $\gamma \in \lbrace+,-\rbrace$ for some $i\in \lbrace 1,2,3,\cdots,m\rbrace$. The vertices $a$ and $b$ are said to $disagree$ on $c$ if they don't $agree$ on $c$. A vertex $a$ is said to see a vertex $b$ if they are either adjacent or connected by a special 2-path. If $a$ and $b$ are connected by a special 2-path with $w$ acting as the internal vertex, then it is said that $u~sees~v~through~w$ or equivalently $v~sees~u~through~w$.

\medskip

We now define a \emph{partial order} $\prec $ for $(m,n)$-colored mixed graphs. We define $G_1\prec G_2$ if either of the following holds:
\begin{itemize}
\item $|V(G_1)|<|V(G_2)|$,

\item $|V(G_1)|=|V(G_2)|~and~|A(G_1)|+|E(G_1)|<|A(G_2)|+|E(G_2)|$.
\end{itemize}

Now we are ready to state and prove the main result of this chapter.

 \begin{theorem}
 For the family $ \mathcal{P}$ of  planar graphs, 
 we have $$ 3p^2+p+1\leq\omega_{r(m,n)}(\mathcal{P})\leq42p^2+8$$  where $ p=2m+n >1$.
 \end{theorem}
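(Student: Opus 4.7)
The lower bound is essentially free: the paper already records the inequality $3p^2+p+1 \leq \omega_{a(m,n)}(\mathcal{P})$ of Bensmail--Duffy--Sen~\cite{duffy}, witnessed by an explicit planar $(m,n)$-clique $C$ on $3p^2+p+1$ vertices. By the corollary to Lemma~\ref{lem structure}, the full vertex set of any $(m,n)$-clique is itself a relative clique, so this $C$ immediately certifies $\omega_{r(m,n)}(\mathcal{P}) \geq 3p^2+p+1$ without any new construction.

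For the upper bound I argue by contradiction, modelled on the subcubic proof. Assume a $\prec$-minimal planar $(m,n)$-colored mixed graph $G$ carrying a relative clique $R$ of size $r \geq 42p^2+9$; set $S = V(G)\setminus R$, and let $e$ count the edges of $G[R]$ and $t$ the number of pairs of $R$ joined by at least one special $2$-path of $G$. Lemma~\ref{lem structure} delivers the key inequality $\binom{r}{2} \leq e + t$. Standard $\prec$-minimality arguments force $S$ to be independent, every $v \in S$ to have degree at least $3$, and every such $v$ to be the middle vertex of at least one special $2$-path between two vertices of $R$. Planarity (via $|E(G)| \leq 3(r+s)-6$ minus the $3s$ edges forced out of $S$) bounds $e$ linearly in $r$, and the convexity estimate
\[
\#\{\text{special $2$-paths through } v\} \;\leq\; \tfrac{p-1}{2p}\, d(v)^2,
\]
obtained by maximising $\sum_{i<j} d_i(v)d_j(v)$ over the $p=2m+n$ adjacency types subject to $\sum_i d_i(v) = d(v)$, yields $t \leq \tfrac{p-1}{2p} \sum_v d(v)^2$.

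What remains is to control $\sum_v d(v)^2$ using the planarity of $G$ together with the relative-clique structure. My plan is a discharging argument: assign initial charge $d(v)-6$ to every vertex so that $\sum_v (d(v)-6) \leq -12$ by Euler's formula, and redistribute charge across faces and across the $R$--$S$ interface so as to convert the planar deficit into a bound $\sum_v d(v)^2 \leq A\,r$ with $A$ explicit in $p$. Substituting back into $\binom{r}{2} \leq e+t$ then produces a quadratic-in-$r$ inequality which, after arithmetic, contradicts $r \geq 42p^2+9$. The main difficulty is precisely this last step, because naive planar estimates (for instance $\sum d(v)^2 \leq \Delta\cdot 2|E(G)|$) are useless since $\Delta$ is unbounded in the planar family. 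The discharging must exploit the rigidity forced by $R$: pairs of $R$-neighbours of a single vertex $v$ that happen to share an adjacency type toward $v$ do not form a special $2$-path at $v$, so they are forced to witness their mandatory special $2$-path elsewhere in the graph. Tracking these compensations against the planar charge deficit is what should push the coefficient of $p^2$ down to $42$.
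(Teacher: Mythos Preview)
Your lower bound matches the paper's exactly: both simply invoke $\omega_{r(m,n)}(\mathcal{P}) \geq \omega_{a(m,n)}(\mathcal{P}) \geq 3p^2+p+1$ from~\cite{duffy}.

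For the upper bound, your route is not the paper's and contains a genuine gap. You try to port the subcubic counting template---$\binom{r}{2} \leq e+t$ with $e$ linear in $r$ by planarity and $t \leq \frac{p-1}{2p}\sum_v d(v)^2$---and then hope to cap $\sum_v d(v)^2$ by discharging. You correctly flag this last step as the crux, but you do not carry it out, and there is no evident way to: planar graphs have unbounded maximum degree, so $\sum_v d(v)^2$ is genuinely unbounded in terms of $|V|$ (a wheel on $n$ vertices already gives $\sum d(v)^2 \geq (n-1)^2$). Your proposed compensation---that $R$-neighbours agreeing at a hub must realise their special $2$-path elsewhere---does not cap degrees, since ``elsewhere'' can again be a high-degree hub; and initial charges $d(v)-6$ together with Euler's formula control $\sum d(v)$, not $\sum d(v)^2$. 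Nothing in this scheme singles out the constant $42$.

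The paper abandons global counting entirely. Minimality forces every helper to have degree $\geq 4$ and $S$ to be independent; the Jendrol'--Voss light-edge theorem~\cite{voss} then yields a \emph{good} vertex $v$ with $d(v)\leq 7$. Every vertex of $R\setminus N[v]$ must see $v$ through one of the at most $7$ neighbours $v_1,\dots,v_r$, and $R\setminus N[v]$ induces an outerplanar, hence $3$-colourable, graph. If $|R|\geq 42p^2+9$ this set has at least $42p^2+1$ vertices, so it contains an independent set $I$ of size $\geq 14p^2+1$; pigeonholing $I$ over the $v_i$ places $\geq 2p^2+1$ pairwise non-adjacent good vertices next to a single $v_i$. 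A short planarity argument (the forbidden configurations $F_1$ and $F_2$) shows that no vertex of a planar graph can be adjacent to $2p^2+1$ pairwise non-adjacent members of a relative clique, giving the contradiction. The bound thus factors as $7\cdot 3\cdot 2p^2 + 8$, with each factor coming from a concrete structural step that your counting approach does not reproduce.
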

 
 \begin{proof}
 The lower bound follows from the fact that 
 $\omega_{r(m,n)}(\mathcal{P}) \geq \omega_{a(m,n)}(\mathcal{P})\geq 3p^2+p+1$ \cite{duffy}.
 
 Let $ {G}$ be a minimal (with respect to $\prec $) $(m,n)$-mixed planar graph with $ \omega_{r(m,n)}({G})\geq42p^2+9$ with a maximum $(m,n)$-relative clique $R$, where $R$ denotes the set of good vertices.
 Also, let $ S=V({G})\setminus R$ denote the set of helper vertices.
 
 Let $h$ be a helper vertex with degree at most 3. So we add an arc or an edge between each non-adjacent pair of vertices from $N(h)$ and delete $h$, resulting in a graph that is still planar and has $(m,n)$-relative clique number 
 $ \omega_{r(m,n)}({G})$ contradicting that $ {G}$ is minimal. So a helper vertex $h\in S$ has degree at least 4.
 
 We now show some forbidden configurations for mixed planar graphs with respect to vertices of its relative clique.
 
 $ \langle1\rangle$ Forbidden configuration $F_1:4m+2n+1=2p+1 $ independent vertices of a relative clique agreeing on a vertex of an $(m,n)$-mixed planar graph through an edge or an arc.
 \begin{proof}
 Assume that $ 4m+2n+1=2p+1$ independent vertices $ a_1,a_2,a_3,\cdots,a_{4m+2n+1}$ of a relative clique agree on a vertex $a$ of an $(m,n)$-mixed planar graph through an edge or an arc in clockwise order around $a$ in a fixed embedding of $G$. Since $ a_i~ and ~a_{i+2}$ are independent vertices of a relative clique, they must see each other through a vertex $w$. Observe that this vertex $w$ is different from the vertex $a$. So $a_i$ sees $a_{i+2}$ through $w$ with the help of a special 2-path. Now $a_{i+1}$ must see the vertices of the relative clique other than $ a_i~ and ~a_{i+2}$ through $w$ to preserve the planarity of the graph. So $w$ is adjacent to all the $4m+2n+1$ vertices. Since there are $m$ arcs and $n$ edges, by Pigeon Hole Principle, there must be three vertices that agree on $w$ through an edge/arc.

  Let $ a_{p},a_{q}~and~a_{r} $ be three such vertices, agreeing on $w$ through an edge/arc. Now a given vertex among the $4m+2n+1$ vertices must see at least $4m+2n-2$ of the remaining vertices through $w$ to preserve the planarity of the graph. So if $a_p$ sees the $4m+2n-2$ vertices other than $a_q$ and $a_r$ through $w$, then the edges/arcs $ aa_{p},aa_{q}~and~aa_{r} $ must be consecutive with the edge/arc $aa_p$ being in the middle. Then $a_q$ cannot see $a_r$ without destroying the planarity of $G$.
\end{proof}

\bigskip

 $ \langle2\rangle$ Forbidden configuration $F_2:2p^2+1$ independent vertices of a relative clique adjacent to a vertex of an $(m,n)$-mixed planar graph.
 \begin{proof}
 At least $2p+1$ of the $2p^2+1$ independent vertices must agree on their common neighbour creating $F_1$.
 \end{proof}
 
\bigskip 
 
 Jendrol' and Voss \cite{voss} showed that any planar graph with minimum degree at least 4 must have an edge/arc $xy$ such that $d(x)+d(y)\leq 11$. Therefore there exists $v\in R$ with $N(v)=\{v_1,v_2,\cdots,v_r\}$ where $r\leq 7$. We fix such a vertex $v$ for the rest of this proof. Now each vertex of $R\diagdown N[v]$ sees $v$ through a vertex from $N[v]$.
 
 So the vertices of $R\diagdown N[v]$ induce an outer-planar graph and is hence 3-colorable. So $R\diagdown N[v]$ contains a set $I$ of at least $14p^2+1$ independent vertices.
 We disjointly partition the vertices of $\displaystyle I = \bigcup^{r}_{i = 1} S_i$ such that the vertices of $S_i$ sees $v$ through $v_i$ but not through $v_j$ for any $j<i$ where $\displaystyle i,j\in \{1,2,3,\cdots,r\}$. Assume that the vertices of $N(v)$ are indexed in such a way that the quantity ($|S_1|,|S_2|,|S_3|,\cdots,|S_r|$) is maximum with respect to lexicographic ordering. Now $|S_1| \geq 2p^2+1$ by pigeon hole principle and $|S_1| \leq 2p^2$ due to forbidden configuration $F_2$ leading to a contradiction.
 
 Hence, $\omega_{r(m,n)}(\mathcal{P})\leq42p^2+8$ where $p=2m+n$.
 \end{proof}

An improvement of the above result, when restricted to the case $(m,n)= (1,0)$ has been proved by 
Das, Mj, and Sen~\cite{sir}. However, for the other values of
 $(m,n) \neq (1,0)$ 
our result is the best known. 

\section{Conclusions}\label{chap con}
This article can be considered as the first systematic study of $(m,n)$-relative clique number of colored mixed graphs. Here we studied the parameter for the graphs with bounded maximum degree, subcubic graphs and planar graphs. One can further extend our work by improving the bounds or by studying the parameter for other families of graphs.

\bibliographystyle{abbrv}
\bibliography{example}

\begin{thebibliography}{10}

\bibitem{coxeter}
N.~Alon and T.~Marshall.
\newblock Homomorphisms of edge-colored graphs and coxeter groups.
\newblock {\em J. Algebraic Comb.}, 8(1):5--13, 1998.

\bibitem{duffy}
J.~Bensmail, C.~Duffy, and S.~Sen.
\newblock Analogous to cliques of (m,n)-colored mixed graphs.
\newblock {\em Graphs and Combinatorics}, 49(4):637--645, 2017.

\bibitem{boro}
O.~V. Borodin.
\newblock On acyclic colorings of planar graphs.
\newblock {\em Discrete Mathematics}, 25:211--236, 2017.

\bibitem{simon}
R.~Brewster.
\newblock Vertex colourings of edge-coloured graphs.
\newblock {\em PhD thesis, Simon Fraser University}, 1993.

\bibitem{nandi}
S.~Das, S.~Nandi, and S.~Sen.
\newblock On chromatic number of colored mixed graphs.
\newblock {\em Algorithms and Discrete Applied Mathematics - Third
  International Conference, {CALDAM} 2017, Sancoale, Goa, India}, 2017.

\bibitem{sir}
S.~Das, S.~Sen, and S.~Mj.
\newblock Relative clique number of planar signed graphs.
\newblock {\em Algorithms and Discrete Applied Mathematics - Second
  International Conference, {CALDAM} 2016, Thiruvananthapuram, India, February
  18-20, 2016, Proceedings}, 2016.

\bibitem{flores}
R.~Fabila-Monroy, C.~Flores, D.~Huemer, and A.~Montejano.
\newblock Using diode lasers for atomic physics.
\newblock {\em Commentationes Mathematicae Universitatis Carolinae},
  49(4):637--645, 2008.

\bibitem{shalu}
D.~Gonçalves, A.~Raspaud, and M.~A. Shalu.
\newblock {\em On Oriented Labelling Parameters}, pages 34--45.
\newblock 2011.

\bibitem{voss}
S.~Jendrol and H.~Voss.
\newblock Light subgraphs of graphs embedded in the plane - a survey.
\newblock {\em Discrete Mathematics}, 313(4):406--421, 2013.

\bibitem{mac}
W.~Klostermeyer and G.~MacGillivray.
\newblock Analogs of cliques for oriented coloring.
\newblock {\em Discuss. Math. Graph Theory}, 24(3):373--388, 2004.

\bibitem{ochem}
A.~Montejano, P.~Ochem, A.~Pinlou, A.~Raspaud, and Ã.~Sopena.
\newblock Homomorphisms of 2-edge-colored graphs.
\newblock {\em Discrete Applied Mathematics}, 158(12):1365--1379, 2010.

\bibitem{nandy}
A.~Nandy, S.~Sen, and Ã.~Sopena.
\newblock Outerplanar and planar oriented cliques.
\newblock {\em J. Graph Theory}, 82(2):165--193, 2016.

\bibitem{reza}
R.~Naserasr and R.~{\v{S}}krekovski.
\newblock The petersen graph is not 3-edge-colorable—a new proof.
\newblock {\em Discrete mathematics}, 268(1):325--326, 2003.

\bibitem{impj}
J.~Ne\v{s}et\v{r}il and A.~Raspaud.
\newblock Colored homomorphisms of colored mixed graphs.
\newblock {\em J. Comb. Theory Ser. B}, 80(1):147--155, 2000.

\bibitem{sen}
S.~Sen.
\newblock A contribution to the theory of graph homomorphisms and colorings.
\newblock {\em PhD thesis, University of Bordeaux}, 2014.

\end{thebibliography}

\end{document}